\newtheorem{Theorem}{Theorem}
\newtheorem{Corollary}{Corollary}
\newtheorem{Example}{Example}
\newtheorem{Remark}{Remark}
\begin{document}

\title{A Generic Transformation to Enable  Optimal Repair in MDS Codes for Distributed Storage Systems}
\author{Jie Li, \IEEEmembership{Member,~IEEE}, Xiaohu Tang, \IEEEmembership{Member,~IEEE}, and Chao Tian, \IEEEmembership{Senior Member,~IEEE}
\thanks{The work of J. Li and X. Tang was supported in part by the National Science Foundation of China under Grant 61325005, the Major Frontier Project of Sichuan Province under Grant 2015JY0282, and  the Project funded by China Postdoctoral Science Foundation under Grant 2017M622385.
The work of C. Tian was supported in part by the National Science Foundation under Grants CCF-15-26095 and CCF-18-32309.
This paper was  presented in part at the 2017 IEEE International Symposium on
Information Theory, Aachen, Germany, and in part at the 2018 IEEE International Symposium on
Information Theory, Vail, CO, USA.}
\thanks{J. Li was with the Information Security and National Computing Grid Laboratory, Southwest Jiaotong University, Chengdu, 610031, China. He is now with the Hubei Key Laboratory of Applied Mathematics, Faculty of Mathematics and Statistics, Hubei University,
Wuhan 430062, China (e-mail: jieli873@gmail.com).}
\thanks{X.H. Tang is with the Information Security and National Computing Grid Laboratory, Southwest Jiaotong University, Chengdu, 610031, China (e-mail: xhutang@swjtu.edu.cn).}
\thanks{C. Tian was with the Department of Electrical Engineering and Computer Science at the University of Tennessee Knoxville. He is now with the Department of Electrical and Computer Engineering, Texas A\&M University, College Station, TX 77843, USA (e-mail: chao.tian@tamu.edu).}
}
\date{}
\maketitle

\begin{abstract}
We propose a generic  transformation that can convert any nonbinary   $(n=k+r,k)$ maximum distance separable (MDS) code into another  $(n,k)$ MDS code over the same field such that 1) some arbitrarily chosen $r$ nodes have the optimal repair bandwidth and the optimal rebuilding access, 2) for the remaining $k$ nodes, the normalized repair bandwidth and the normalized rebuilding access (over the file size) are preserved, 3) the sub-packetization level is increased only by a factor of $r$.
Two immediate applications of this generic transformation are then presented.  The first application is that we can transform any nonbinary MDS code with the optimal repair bandwidth or the optimal rebuilding access for the systematic nodes only, into a new MDS code which possesses the corresponding repair optimality for all nodes. The second application is that by applying the transformation multiple times, any nonbinary   $(n,k)$ scalar MDS code can be converted into an $(n,k)$ MDS code with the optimal repair bandwidth and the optimal rebuilding access for all nodes, or only a subset of nodes, whose sub-packetization level is also optimal.
\end{abstract}

\begin{IEEEkeywords}
Distributed storage, high-rate, MDS codes,   optimal rebuilding access, optimal repair.
\end{IEEEkeywords}

\section{Introduction}

Distributed storage systems built on a large number of unreliable storage nodes have important
applications in large-scale data center settings, such as Facebook's coded Hadoop, Google Colossus, and Microsoft Azure \cite{Micro}, and in peer-to-peer storage settings,
such as OceanStore \cite{ocean}, Total Recall \cite{total}, and DHash++ \cite{Dhash}. To ensure reliability, redundancy is imperative for these systems. Generally speaking,
there are two mechanisms to introduce redundancy, namely replication and erasure coding. Comparing with the former, erasure coding can provide
higher reliability at the same redundancy level, and thus is more attractive.

When a storage node fails, a self-sustaining distributed storage system should make a repair to maintain the continuing operation of the overall system. During the repair process, the \textit{repair bandwidth}, which is defined as the amount of data downloaded from the surviving nodes to repair the failed node, should be minimized. The repair bandwidth of the classic MDS erasure codes, such as Reed-Solomon codes \cite{RS-codes}, is rather excessive because they rely on a naive repair strategy, \textit{i.e.},  to first reconstruct the original file, and then repair the failed node.

The repair problem was first brought into the spotlight by Dimakis \textit{et al.} \cite{Dimakis}.   As a result, the optimal repair bandwidth and the optimal rebuilding access\footnote{The seminal work in \cite{Dimakis} identified two different repair modes, namely, exact repair and functional repair. Under exact repair, a replacement node is required to store exactly the same data as that was stored in the failed node; in contrast, under functional repair, a failed node is replaced by a node that is functionally equivalent. In this paper we consider the first case which is practically more important.} were subsequently established \cite{Dimakis,Barg1}.
A node of an $(n,k)$ MDS code with a sub-packetization level $N$  is said to have the optimal repair bandwidth if the repair bandwidth is $\gamma^*(d)\triangleq\frac{d}{(d-k+1)}N$, and is said to have  the optimal rebuilding access  if the amount of data accessed is also $\gamma^*(d)$, where $d$ ($k\le d\le n-1$) is the number of surviving nodes accessed during the repair process. Various explicit or less explicit code constructions have been proposed in the literature, usually for certain restricted parameter ranges, where some of the notable  works are \cite{Goparaju,product,Suh,repair-parity-zigzag,hadamard,Sasidharan-Kumar,Zigzag,Hadamard-strategy,extend-zigzag}.
Specifically, most of the aforementioned works \cite{Suh,repair-parity-zigzag,hadamard,Sasidharan-Kumar,Zigzag,Hadamard-strategy,extend-zigzag} consider the case $d=n-1$ to maximally reduce the repair bandwidth, since the minimum repair bandwidth $\gamma^*(d)$ is a decreasing function of $d$; this setting is also the focus of this work.

The initial motivation of our work is the following observation. At the practically more important range of high-rate case, \textit{i.e.}, $k/n > 1/2$, most early constructions that are able to optimally repair any single node failure are limited and usually restricted to a small number of parity nodes \cite{repair-parity-zigzag,hadamard,Sasidharan-Kumar,Zigzag,Hadamard-strategy,extend-zigzag}. In contrast, there exist more code constructions that can  optimally repair any failure of the systematic nodes \cite{Goparaju,hadamard,Zigzag,invariant-subspace,Etzion,Long-IT} with less restrictions on the parameters. This phenomenon left the impression that the latter is considerably simpler than the former, which intrigued us to seek better understanding of this perceived barrier.

Our quest eventually led to a very powerful transformation, which is
the subject of this paper. More precisely, we provide a transformation
that can convert any  nonbinary   MDS  code into another MDS
code, which endows any $r=n-k$ chosen nodes with the optimal repair bandwidth and the optimal rebuilding access properties, and at the same time, preserves the normalized  repair bandwidth and the normalized rebuilding
access for the remaining $k$ nodes. The resultant code uses the same finite field as the base
code, and has a sub-packetization level a factor of $r$ larger. As two
immediate applications of this transformation, we show that
1) any nonbinary   MDS  code with the optimal repair bandwidth
or the optimal rebuilding access for the systematic nodes  only can
be converted into an MDS  code with the corresponding repair
optimality for all nodes, and 2)  by applying the transformation multiple times, any nonbinary   $(n,k)$ scalar  MDS code can be converted into an $(n,k)$ MDS  code with the optimal repair bandwidth and  the optimal rebuilding access for all nodes (or a desired subset of the nodes). In the second application, the  resultant codes have the optimal  sub-packetization level, which matches the lower bounds recently identified in \cite{tight-bound-on-alpha} for $(n,k)$ MDS  codes with the optimal rebuilding access.

The remainder of the paper is organized as follows. Section
II gives  some historical notes and explains the relation to several existing works. Section III presents some necessary preliminaries. The generic transformation is given in Section IV, followed by the proofs of
the asserted properties. Two important applications of this transformation are discussed in Section V. Finally, Section VI provides some concluding remarks.

\section{Historical Notes and Relation to Existing Works}
\label{sec:history}

As explained in the early version \cite{transform-arxiv} of this paper, we were initially motivated to seek an explanation of the perceived technical barrier, and to provide a construction of high-rate MDS  codes that can optimally repair all nodes, based on existing MDS  codes that can only optimally repair the systematic nodes. Independent and parallel to our work, Ye and Barg \cite{Barg1,Barg2} proposed several explicit constructions of high-rate MDS  codes that can optimally repair all nodes. Particularly, the codes in \cite{Barg1} allow the number of helper nodes to be anywhere from $k+1$ to $n-1$, and they also allow simultaneous repair of multiple node failures, solving the problem of constructing MDS codes with the optimal repair bandwidth in full generality. Moreover, the code in \cite{Barg2} has the optimal sub-packetization level with respect to the lower bound for $(n,k)$ MDS  codes with the optimal rebuilding access given in \cite{tight-bound-on-alpha}. Shortly after, Sasidharan \textit{et al.} \cite{Sasidharan-Kumar2,Sasidharan-Kumar3} independently discovered two code constructions based on a neat data cube representation. The construction in \cite{Barg2} and that in \cite{Sasidharan-Kumar2} turn out to be essentially equivalent. One key new ingredient in \cite{Barg1,Barg2,Sasidharan-Kumar2,Sasidharan-Kumar3},  in contrast to most previous efforts, is that these constructions are given in terms of parity-check matrix, and as a consequence they do not distinguish between the systematic nodes and the parity nodes at all. The success of these constructions can essentially be interpreted as also showing  that the aforementioned barrier is only a misimpression, however without directly addressing the relation between the two different repair requirements. After our initial discovery of the transformation \cite{transform-arxiv} and during the preparation of \cite{transform-ISIT}, it became clear to us that this transformation is more powerful than we had originally realized, which led to the current form of presentation as a generic code transformation and its applications; see also \cite{tight-bound-on-alpha} for a discussion on these closely related discoveries.

\begin{table*}[htbp]
\begin{center}
\caption{A comparison between the $(n,k)$ piggyback codes in \cite{piggyback-arxiv,Yang-piggyback} and the resultant $(n,k)$ MDS codes obtained from the first application in Section IV
}\label{Table comp}
\begin{tabular}{|c|c|c|c|c|}
\hline
&\multirow{2}{*}{Sub-packetization level $N$}& \multirow{2}{*}{Field size } &The ratio of repair bandwidth for  & \multirow{2}{*}{Remark} \\
&   &  &the parity nodes to the optimal value  & \\
\hline\hline
Base code  & $N'$ &  $q$ & $\frac{nr-r^2}{n-1}$ & not optimal \\
\hline
  Piggyback code in \cite{piggyback-arxiv}               &$2N'$   &  $q$  &  $>\frac{1}{2}\frac{nr-r^2}{n-1}$ &not optimal    \\
\hline
 Piggyback code in \cite{Yang-piggyback} &   $rN'$  & $q$   & $\frac{n+r^2-2r}{n-1}$  &asymptotically optimal \\
\hline
Codes obtained from the first application  & $rN'$ &  $q(q\ge 3)$ & $1$ &optimal \\
\hline
\end{tabular}
\end{center}
\end{table*}

\begin{table*}[htbp]
\begin{center}
\caption{A comparison of some parameters between the  $(n,k)$ MDS  codes in \cite{Barg1,Barg2} and the explicit $(n,k)$ MDS  codes obtained by applying the transformation once  to the Hadamard design code \cite{hadamard}, the zigzag code \cite{Zigzag}, and the optimal access code together with the long MDS code \cite{Long-IT,invariant-subspace}}\label{Table comp Ye}
\begin{tabular}{|c|c|c|c|}
\hline
               & Sub-packetization level $N$ &Field size $q$ &  Remark  \\
\hline\hline
Ye-Barg code 1 \cite{Barg1}  &  $r^{n}$ &   $q \ge rn$ &\\
\hline
The first application on  & \multirow{2}{*}{$r^{k+1}$}& \multirow{2}{*}{$q\ge 2k+1$  and $q$ is odd if $r=2$ } &   \\
Hadmard design code \cite{hadamard}&  & & \\
\hline
\hline
Ye-Barg code 2 \cite{Barg1}  &  $r^{n-1}$  &   $q>n$ &\\
\hline
The first application  & \multirow{2}{*}{$r^k$} & $q=3$ if $r=2$&\\
on Zigzag code \cite{Zigzag}  &   & $q=4$ if $r=3$ &  \\
\hline
\hline
Ye-Barg code 3 \cite{Barg2}  &  $r^{\lceil\frac{n}{r}\rceil}$  &    $q\ge r\lceil{n\over r}\rceil$&  \\
\hline
The first application on & \multirow{2}{*}{$r^{\frac{n}{r}}$}& $q>k/2$ if $r=2$  &\multirow{2}{*}{$r|n$} \\
 optimal access code  \cite{Etzion}&   &$q>k$ and $q$ is even if $r=3$ &\\
\hline
The first application on  & \multirow{2}{*}{$r^{\frac{n+1}{r+1}}$}& \multirow{2}{*}{$q>2k$ if $r=2$} &\multirow{2}{*}{$(r+1)|(n+1)$}  \\
Long MDS code \cite{Long-IT,invariant-subspace}&   && \\
\hline
\end{tabular}
\end{center}
\end{table*}

\begin{table*}[htbp]
\begin{center}
\caption{A comparison between the $(n,k)$ MDS  codes proposed in \cite{Barg2,Sasidharan-Kumar2}  and the $(n,k)$ codes obtained from the second application in Section IV
}\label{Table comp Ye Kumar}
\begin{tabular}{|c|c|c|c|c|}
\hline
&{Sub-packetization level $N$}& {Field size }  &{Optimal rebuilding access for all nodes} \\
\hline\hline
Ye-Barg code 3  \cite{Barg2}               &$r^{\lceil{n\over r}\rceil}$   &  $q\ge r{\lceil{n\over r}\rceil}$  &  Yes    \\
\hline
The PCT code  \cite{Sasidharan-Kumar2} &   $r^{\lceil{n\over r}\rceil}$  & $q\ge r{\lceil{n\over r}\rceil}$      &Yes \\
\hline
Codes obtained from the second application  & $r^{\lceil{n\over r}\rceil}$ &  $q\ge n$ & Yes \\
\hline
\end{tabular}
\end{center}
\end{table*}

In retrospect, the constructions in \cite{Barg2} and \cite{Sasidharan-Kumar2} in fact share the same core technique as ours,  which is referred to as pairwise coupling transformation (PCT) in \cite{tight-bound-on-alpha}. Therefore, we refer the MDS code constructed in \cite{Sasidharan-Kumar2} as the PCT code hereafter. The key conceptual difference is that this technique was presented in \cite{Barg2} and \cite{Sasidharan-Kumar2} from the perspective of the parity-check matrix while ours is from the perspective of the generator matrix. Furthermore, in \cite{Barg2} and \cite{Sasidharan-Kumar2} this core technique was applied on all the pairs simultaneously which makes the process much less explicit, whereas we isolate the pairs which helps to untangle the complicated process. As a consequence of the abstraction as a generic transformation, we can elucidate the requirements on the base MDS code, the sufficient conditions for the various components of the transformation, and the properties of the resultant code. These conditions allow us more design choices in constructing the codes, and indeed reveal coding techniques that are not possible in either \cite{Barg2} or \cite{Sasidharan-Kumar2}; see Tables
\ref{Table comp Ye} and \ref{Table comp Ye Kumar}, and Remark \ref{Rm_compare}.

One  important subtlety is that our generic transformation is based on transforming known MDS  codes. As such, if the base code is explicit, the resultant code is also explicit; however, if the base code is not explicit, then the resultant code is also not explicit. This is not a cause for concern in the second application of the transformation, since the base code is any scalar MDS code, for which well-known construction techniques exist, however, more caution is warranted in the first application where the systematic nodes in the base code need to have the optimal repair property. Particularly, when $r>3$, the code constructions in \cite{Sasidharan-Kumar,hadamard,Hadamard-strategy,Zigzag,extend-zigzag,Long-IT} are only shown to exist in a sufficiently large alphabet guaranteed by either the Schwartz-Zippel lemma or the Combinatorial Nullstellensatz \cite{Alon}. To find exact code will necessitate a search for valid assignments to the entries of the generator matrix, which may not be trivial in general. In contrast, the constructions in \cite{Sasidharan-Kumar2,Sasidharan-Kumar3,Barg1,Barg2}, those in \cite{Sasidharan-Kumar,hadamard,Hadamard-strategy,Long-IT} for $r=2$, and those in \cite{Zigzag,extend-zigzag,Etzion} for $r=2$ and $r=3$,  are explicit in the sense that the entries of the generator matrix can be assigned without any search.

We note that another thread of efforts particularly relevant to our work is the piggybacking design framework \cite{piggyback-arxiv,Hitchhiker},  which was proposed to reduce the repair bandwidth or reduce the repair-locality of a base MDS code.
We were indeed partially motivated by this design framework.
The transformation we propose here has a similar flavor as the piggybacking design, \textit{i.e.}, by operating on multiple instances\footnote{The term ``instance'' refers to a codeword obtained by applying the coding operation on part of the raw data, and our construction involves applying the same coding operation on non-overlapping parts of the raw data to obtain multiple coding instances. We adopted this terminology here to be consistent with that used in the piggybacking framework \cite{piggyback-arxiv}.} of a base code. However, the resultant code does not belong to the piggybacking design framework, since the latter stipulates that only a function of the symbols in the previous instances can be added to the symbol in the current instance, whereas our transformation does not observe this sequential order. Furthermore,  the existing piggybacking designs in \cite{piggyback-arxiv,Yang-piggyback} both suffer a loss of optimality in terms of repair bandwidth and rebuilding access.

A comparison between the piggyback codes in \cite{piggyback-arxiv,Yang-piggyback} and the resultant MDS codes obtained from the first application in Section IV is provided in Table \ref{Table comp},  a comparison between the MDS  codes proposed by Ye and Barg and the codes obtained from the first application in Section IV is provided in Table \ref{Table comp Ye},   and a comparison between the MDS  codes proposed in \cite{Barg2,Sasidharan-Kumar2}  and the codes obtained from the second application in Section IV is provided in Table \ref{Table comp Ye Kumar}. It is seen from these comparisons that   the resultant codes obtained from the generic transformation have two main advantages: 1) the optimal repair bandwidth for the parity nodes, whereas the repair bandwidth for the parity nodes of the piggyback code in \cite{piggyback-arxiv} (resp. in \cite{Yang-piggyback}) is far from optimality (resp. asymptotically optimal); 2) a lower sub-packetization level and/or a smaller field size in some cases compared to the MDS  codes in \cite{Barg1,Barg2,Sasidharan-Kumar2}.

\section{Preliminaries}
For any two integers $i<j$, denote by $[i,j]=\{i,i+1,\cdots,j\}$ and $[i,j)=\{i,i+1,\cdots,j-1\}$. Let $q$ be a prime power and $\mathbb{F}_q$ be the finite field with $q$ elements.
Assuming that a source data file comprising of $M=kN$ symbols over a finite field $\mathbb{F}_q$ is encoded by a base  $(n,k)$  MDS code, and then dispersed
across $n$ storage nodes, each  storing  $N$ symbols. In practice, a  code in systematic form is more preferred. In the systematic form, the first $k$ nodes storing the original file are named \textit{systematic nodes},
whose contents are denoted as  $\mathbf{f}_0, \mathbf{f}_1, \cdots, \mathbf{f}_{k-1}$, respectively, where $\mathbf{f}_i$ is a column vector of length $N$; the remaining nodes
are referred to as \textit{parity nodes}, whose contents are linear combinations of the data in the systematic nodes, \textit{i.e.},
 $\mathbf{f}_{k+i}=A_{i,0}\mathbf{f}_0+\cdots + A_{i,k-1}\mathbf{f}_{k-1}$, for $i\in [0,r)$, where $r=n-k$ and $A_{i,j}$ ($j \in [0,k)$) is
an $N\times N$ matrix over $\mathbb{F}_q$, termed  the \textit{coding matrix}  of systematic  node $j$ for parity node $i$. Systematic node $j$ and parity node $i$ are also respectively termed node $j$ and node $k+i$ for convenience. Note that an MDS code is also called a scalar MDS code if $N=1$ and a vector MDS code if $N>1$. The structure of an  $(n,k)$ systematic  MDS code can be
specified by the following equations,
\begin{eqnarray*}\label{MSR_Model}
\mathbf{f}_{k+i}=A_{i,0}\mathbf{f}_0+\cdots + A_{i,k-1}\mathbf{f}_{k-1},~~i\in[0,r).
\end{eqnarray*}
An $(n,k)$ MDS code  has \textit{the MDS property} that the source data file can be reconstructed by connecting any $k$ out of the $n$ nodes, and is preferred  to have \textit{the optimal repair bandwidth}, \textit{i.e.},  any failed node $i$ can be repaired by downloading $N/ r$ symbols from each surviving node $j$,  $j\in [0,n)\backslash\{i\}$. In addition to the optimal repair bandwidth, it is also desirable if the nodes have the \textit{optimal rebuilding access}. That is, when repairing a failed node, only $N/r$ symbols are accessed at each surviving node, \textit{i.e.}, the minimum amount of data is accessed at each surviving node \cite{Barg2}. This appealing property enhances the repair bandwidth requirement, and codes with this property are capable of substantially reducing the disk I/O overhead during the repair process.

For a general MDS code with or without special repair ability, we associate with each   node
$i$  a   repair bandwidth profile
\begin{eqnarray*}
 \bm{\beta}_i\triangleq (\beta_{i,0},\beta_{i,1},\cdots,\beta_{i,i-1},\beta_{i,i+1},\cdots,\beta_{i,n-1}),
\end{eqnarray*}
where $\beta_{i,j}$  denotes the amount of symbols sent from node $j$  when repairing node $i$. The data sent from node $j$  when repairing node $i$ is normally obtained by multiplying $\mathbf{f}_j$ with a $\beta_{i,j}\times N$ matrix  $S_{i,j}$ of full rank,
\textit{i.e.}, $S_{i,j}\mathbf{f}_j$, where $S_{i,j}$  is  usually called the \textit{repair matrix} in the literature.
Similarly, we associate with each node $i$
 a rebuilding access profile
\begin{eqnarray*}
 \bm{\delta}_i\triangleq (\delta_{i,0},\delta_{i,1},\cdots,\delta_{i,i-1},\delta_{i,i+1},\cdots,\delta_{i,n-1}),
\end{eqnarray*}
where $\delta_{i,j}$  denotes the amount of symbols accessed at node $j$ when repairing node $i$, \textit{i.e.}, the number of nonzero columns of the matrix $S_{i,j}$.

\section{A Generic Transformation for MDS  Codes}\label{Sec tran}

In this section,  we propose a generic method that can transform any known nonbinary   $(n,k)$ MDS code  into a new $(n,k)$ MDS code with the optimal rebuilding access for an arbitrary set of $r=n-k$ nodes, while keeping the normalized repair bandwidth and the normalized
rebuilding access  of the other $k$ nodes intact. Given an $(n,k)$ base code,  the $r$ nodes which we wish to endow with the optimal repair property are called the \textit{target nodes}, while the other $k$ nodes are named the \textit{remainder nodes}. Without loss of generality, we always assume  that the last $r$ nodes are the target nodes unless otherwise stated. For simplicity,  sometimes we also denote by TN the target node and RN the remainder node in the sequel.
Before presenting this transformation,  an example is provided to illustrate the key idea behind it.

\subsection{An Example $(9,6)$ MDS  Code}
\label{ex_step2}
Given  a known nonbinary $(9,6)$ MDS code $\mathcal{C}_{1}$  over the   finite field $\mathbb{F}_q$,  where $q$ is odd (for the general construction, $q$ can be both even and odd),  let $S_{i,j}$, $j\in[0,9)\backslash\{i\}$ be the repair matrices for remainder node $i$ ($i\in[0,6)$).  For $l\in [0,3)$, let $\mathbf{f}_0^{(l)},\mathbf{f}_1^{(l)},\cdots,\mathbf{f}_{5}^{(l)}$ and $\mathbf{g}_0^{(l)},\mathbf{g}_1^{(l)},\mathbf{g}_2^{(l)}$ be the data  respectively stored at remainder nodes $0,1,\cdots,5$ and target nodes $0,1,2$ of an instance of the MDS code $\mathcal{C}_{1}$.   Through the generic transformation, we can obtain a $(9,6)$ MDS  code with the optimal rebuilding access for the target nodes, as given in Table \ref{Table stru new}.

\begin{table}[htbp]
\begin{center}
\caption{A $(9,6)$  MDS  code with the optimal rebuilding access for the target nodes}\label{Table stru new}
\setlength{\tabcolsep}{5.4pt}
\begin{tabular}{|c|c|c|c|c|c|c|}
\hline  RN 0  &  \multirow{2}{*}{$\cdots$} & RN $5$  & TN $0$ &TN $1$ & TN $2$  \\
($\mathbf{f}_0$) & &($\mathbf{f}_{5}$)&($\mathbf{f}_{6}$) & ($\mathbf{f}_{7}$)&($\mathbf{f}_{8}$)\\
\hline\hline $\mathbf{f}_0^{(0)}$ & $\cdots$ & $\mathbf{f}_{5}^{(0)}$ & $\mathbf{g}_0^{(0)}$ & $\uline{\mathbf{g}_{1}^{(0)}+\mathbf{g}_1^{(1)}}$ &  $\mathbf{g}_{2}^{(0)}+\mathbf{g}_2^{(2)}$\\
\hline  $\mathbf{f}_0^{(1)}$ & $\cdots$ & $\mathbf{f}_{5}^{(1)}$ &  \uline{${-\mathbf{g}_1^{(1)}+\mathbf{g}_{1}^{(0)}}$} & ${\mathbf{g}_{2}^{(1)}}$ &  $\mathbf{g}_{0}^{(1)}+\mathbf{g}_{0}^{(2)}$\\
\hline $\mathbf{f}_0^{(2)}$ & $\cdots$ & $\mathbf{f}_{5}^{(2)}$ & $\dashuline{-\mathbf{g}_2^{(2)}+\mathbf{g}_{2}^{(0)}}$ & $\dashuline{-\mathbf{g}_{0}^{(2)}+\mathbf{g}_{0}^{(1)}}$ &  $\mathbf{g}_{1}^{(2)}$\\
\hline
\end{tabular}
\end{center}
\end{table}

\textbf{Reconstruction:} Let us focus on the reconstruction of the original file by using data stored at nodes $2$ to $7$;  other cases can be addressed similarly.
In Table \ref{Table stru new}, from the symbols that are underlined, we can recover $\mathbf{g}_1^{(0)}$ and $\mathbf{g}_{1}^{(1)}$. Together with the other data in  rows 1, 2 and columns $2$ to $7$, we now have
\begin{align*}
&(\mathbf{f}_2^{(0)},\ldots,\mathbf{f}_5^{(0)}, \mathbf{g}_0^{(0)}, \mathbf{g}_1^{(0)}),\\
&(\mathbf{f}_2^{(1)},\ldots,\mathbf{f}_5^{(1)}, \mathbf{g}_1^{(1)}, \mathbf{g}_2^{(1)}),
\end{align*}
from which $(\mathbf{f}_0^{(0)},\ldots,\mathbf{f}_5^{(0)})$ and $(\mathbf{f}_0^{(1)},\ldots,\mathbf{f}_5^{(1)})$ can be reconstructed, respectively, because the base code is an MDS code.
Next, with these available data, $\mathbf{g}_2^{(0)}$ and $\mathbf{g}_0^{(1)}$ can now be computed, and then subtracted from the items marked with dashed underline to obtain $\mathbf{g}_2^{(2)}$ and $\mathbf{g}_0^{(2)}$.
Finally, together with the other data in the last row and columns $2$ to $5$,
we now also have
\begin{align*}
&(\mathbf{f}_2^{(2)},\ldots,\mathbf{f}_5^{(2)}, \mathbf{g}_0^{(2)}, \mathbf{g}_2^{(2)}),
\end{align*}
from which we can reconstruct $(\mathbf{f}_0^{(2)},\ldots,\mathbf{f}_5^{(2)})$. Thus the original file can indeed be reconstructed using data at nodes $2$ to $7$.

\textbf{Optimal rebuilding access for the target nodes:} Let us focus on the repair of target node $0$, for which the following data are downloaded
\begin{align*}
\mathbf{f}_{0}^{(0)},\mathbf{f}_{1}^{(0)},\ldots,\mathbf{f}_{5}^{(0)}, \mathbf{g}_{1}^{(0)}+\mathbf{g}_1^{(1)},\mathbf{g}_{2}^{(0)}+\mathbf{g}_2^{(2)},
\end{align*}
\textit{i.e.}, the data in row 1 of Table \ref{Table stru new}.
Clearly, $\mathbf{g}_0^{(0)}$ can be computed using $\mathbf{f}_{0}^{(0)},\mathbf{f}_{1}^{(0)},\ldots,\mathbf{f}_{5}^{(0)}$. To compute $-\mathbf{g}_1^{(1)}+\mathbf{g}_{1}^{(0)}$ stored at target node $0$, observe firstly that $\mathbf{g}_{1}^{(0)}$ can also be computed using  $\mathbf{f}_{0}^{(0)},\mathbf{f}_{1}^{(0)},\ldots,\mathbf{f}_{5}^{(0)}$, however, this implies that from the downloaded data $\mathbf{g}_{1}^{(0)}+\mathbf{g}_1^{(1)}$, we can recover $\mathbf{g}_1^{(1)}$ as well, and subsequently obtain $-\mathbf{g}_1^{(1)}+\mathbf{g}_{1}^{(0)}$. The other piece of coded data $-\mathbf{g}_2^{(2)}+\mathbf{g}_{2}^{(0)}$ stored at target node $0$ can be computed similarly. Thus  target node $0$ can indeed be repaired optimally and has the optimal rebuilding access.

\begin{table*}[htbp]
\begin{center}
\caption{Data downloaded from surviving nodes when repairing remainder node $0$ of the MDS code  in Table \ref{Table stru new}}\label{Table rep sys new}
\begin{tabular}{|c|c|c|c|c|c|}
\hline RN 1 ($\mathbf{f}_1$) & $\cdots$ & RN $5$ ($\mathbf{f}_{5}$) & TN $0$ ($\mathbf{f}_{6}$)  & TN $1$ ($\mathbf{f}_{7}$) & TN $2$ ($\mathbf{f}_{8}$) \\
\hline\hline  $S_{0,1}\mathbf{f}_1^{(0)}$ &  $\cdots$ & $S_{0,5}\mathbf{f}_{5}^{(0)}$ & {$S_{0,6}\mathbf{g}_0^{(0)}$} &  ${S_{0,7}(\mathbf{g}_{1}^{(0)}+\mathbf{g}_1^{(1)})}$ & $S_{0,8}(\mathbf{g}_{2}^{(0)}+\mathbf{g}_2^{(2)})$ \\
\hline  $S_{0,1}\mathbf{f}_1^{(1)}$ &   $\cdots$  &  $S_{0,5}\mathbf{f}_{5}^{(1)}$&  {$S_{0,7}(-\mathbf{g}_1^{(1)}+\mathbf{g}_{1}^{(0)})$} & ${S_{0,8}\mathbf{g}_{2}^{(1)}}$ &  $S_{0,6}(\mathbf{g}_{0}^{(1)}+\mathbf{g}_{0}^{(2)})$   \\
\hline
                 $S_{0,1}\mathbf{f}_1^{(2)}$ &   $\cdots$ &$S_{0,5}\mathbf{f}_{5}^{(2)}$&  $S_{0,8}(-\mathbf{g}_2^{(2)}+\mathbf{g}_{2}^{(0)})$ & $S_{0,6}(-\mathbf{g}_{0}^{(2)}+\mathbf{g}_{0}^{(1)})$& $S_{0,7}\mathbf{g}_{1}^{(2)}$\\
\hline
\end{tabular}
\end{center}
\end{table*}

\textbf{Repair efficiencies of the remainder nodes:}
Let us focus on repairing  remainder node $0$ of the constructed $(9,6)$ MDS code, which  can be accomplished by downloading the data in Table \ref{Table rep sys new}. To see this, consider the repair of $\mathbf{f}_{0}^{(0)}$, for which the original MDS code $\mathcal{C}_{1}$ needs to download
\begin{align}
S_{0,1}\mathbf{f}_1^{(0)}, S_{0,2}\mathbf{f}_2^{(0)},\ldots,S_{0,5}\mathbf{f}_5^{(0)}, S_{0,6}\mathbf{g}_0^{(0)},S_{0,7}\mathbf{g}_1^{(0)},S_{0,8}\mathbf{g}_2^{(0)}
\label{eqn:repairexample}
\end{align}
for the repair. Comparing these with the downloaded data in row 1 of Table \ref{Table rep sys new}, we know that $S_{0,7}\mathbf{g}_1^{(0)},S_{0,8}\mathbf{g}_2^{(0)}$ are not directly available. However, ${S_{0,7}(\mathbf{g}_{1}^{(0)}+\mathbf{g}_1^{(1)})}$, downloaded from target node $1$, and $S_{0,7}(-\mathbf{g}_1^{(1)}+\mathbf{g}_{1}^{(0)})$, downloaded from target node $0$, can be utilized to recover $S_{0,7}\mathbf{g}_1^{(0)}$; the data $S_{0,8}\mathbf{g}_2^{(0)}$ can be recovered similarly. At this point, with all the data listed in (\ref{eqn:repairexample}) available, the repair mechanism in the original MDS code $\mathcal{C}_{1}$ can be invoked to compute $\mathbf{f}_{0}^{(0)}$. The repair of $\mathbf{f}_{0}^{(1)}$ and $\mathbf{f}_{0}^{(2)}$ can be done in a similar manner, and thus  remainder  node $0$ can indeed be repaired.

Now, let us investigate the repair efficiencies of remainder node 0, \textit{i.e.}, the normalized repair bandwidth and the normalized rebuilding access. Let $\bm{\beta}_{0}$ and  $\bm{\delta}_{0}$  (resp. $\bm{\hat{\beta}}_{0}$ and $\bm{\hat{\delta}}_{0}$) respectively be the repair bandwidth profile and the rebuilding access profile of remainder node 0 of the base code (resp. the new code). From the above analysis, it is easy to see that
\begin{eqnarray}\label{Eqn ex profiles}
  \sum\limits_{j=1}^{8}\hat{\beta}_{0,j} = 3 \sum\limits_{j=1}^{8}\beta_{0,j},~
\sum\limits_{j=1}^{8}\hat{\delta}_{0,j} = 3 \sum\limits_{j=1}^{8}\delta_{0,j}.
\end{eqnarray}
Note that the file size of the new code is three times as that of   the base code, which in conjunction with \eqref{Eqn ex profiles} implies that
remainder node 0 of the new code has the same  normalized repair bandwidth and normalized rebuilding access as those of  the base code.
The repair efficiencies of the other remainder nodes can be verified in the same manner.

\subsection{The Generic Transformation}\label{subsec generic trans}
In this subsection, we present the generic transformation, which utilizes a known  nonbinary $(n,k)$ MDS code $\mathcal{C}_{1}$  with a sub-packetization level $N$ as the base code.   Let     $\bm{\beta}_i$ and $\bm{\delta}_i$ respectively denote the repair bandwidth profile  and rebuilding access profile for node $i$.
The transformation can be performed through the following three steps.

\subsection*{\textbf{Step 1: An intermediate MDS code $\mathcal{C}_2$ by SPACE SHARING $r$ instances of the base code $\mathcal{C}_{1}$ }}
Let
$\mathbf{f}_{i}^{(l)}$ and $\mathbf{g}_{j}^{(l)}$ respectively be the data stored at remainder node $i$ and target node $j$ of an instance of the code $\mathcal{C}_{1}$,   where $i\in [0,k)$ and $l,j\in [0,r)$. We can thus construct an intermediate MDS  code $\mathcal{C}_2$ with sub-packetization level $rN$ by space sharing $r$ instances of the base code $\mathcal{C}_{1}$.

\subsection*{\textbf{Step 2: An intermediate MDS code  $\mathcal{C}_3$  by PERMUTING the data in the target nodes of $\mathcal{C}_2$}}

From $\mathcal{C}_2$, we construct another intermediate MDS  code $\mathcal{C}_3$ by permuting the data in the target nodes while keeping the remainder
nodes intact. Let  $\mathbf{h}_{j}$ denote the data stored at target node $j$ of code $\mathcal{C}_3$. For convenience, we write $\mathbf{h}_{j}$ as \begin{eqnarray*}
  \mathbf{h}_{j} &=& \left(
                         \begin{array}{c}
                           \mathbf{h}_{j}^{(0)} \\
                           \vdots \\
                           \mathbf{h}_{j}^{(r-1)} \\
                         \end{array}
                       \right), j\in [0,r)
\end{eqnarray*}
where $\mathbf{h}_{j}^{(l)}$ ($l\in [0,r)$) is a column vector of length $N$. Let $\pi_0,\pi_1,\cdots,\pi_{r-1}$ be $r$ permutations on $[0,r)$, which should satisfy some specific requirements (the requirements are given more precisely in Theorem \ref{Thm repair sys}). Then $\mathbf{h}_{j}^{(l)}$ in $\mathcal{C}_3$ is defined as
\begin{eqnarray}\label{Eqn perm h}
\mathbf{h}_{j}^{(l)}=\mathbf{g}_{\pi_l(j)}^{(l)}, ~~j,l\in [0,r).
\end{eqnarray}

\subsection*{\textbf{Step 3: The resultant  storage code $\mathcal{C}_4$  by PAIRING the data in the target nodes of $\mathcal{C}_3$}}

From the  code  $\mathcal{C}_3$, we construct the desired storage code $\mathcal{C}_4$ by modifying only the data at the target nodes while keeping the remainder nodes   intact. Let   $\mathbf{h}'_{j}$ denote the data stored at target node $j$  of code $\mathcal{C}_4$. For convenience, we write $\mathbf{h}'_{j}$ as
\begin{eqnarray*}
  \mathbf{h}'_{j} &=& \left(
                         \begin{array}{c}
                           \mathbf{h'}_{j}^{(0)} \\
                           \vdots \\
                           \mathbf{h'}_{j}^{(r-1)} \\
                         \end{array}
                       \right), j\in [0,r)
\end{eqnarray*}
where $\mathbf{h'}_{j}^{(l)}$ ($l\in [0,r)$) is a column vector of length $N$
defined by
\begin{eqnarray}\label{Eqn new con}
\mathbf{h'}_{j}^{(l)}=\left\{ \begin{array}{ll}
\mathbf{h}_{j}^{(j)}, & \textrm{if}~j=l \\
\theta_{j,l}\mathbf{h}_{j}^{(l)}+\eta_{l,j}\mathbf{h}_{l}^{(j)}, &  \mbox{otherwise}
\end{array}\right.
\end{eqnarray}
with $\theta_{j,l},~\eta_{l,j}\in \mathbf{F}_q\backslash\{0\}$ such that $\mathbf{h'}_{j}^{(l)}$ and $\mathbf{h'}_{l}^{(j)}$ are linearly independent for $j\ne l$. Particularly,  we can set $\eta_{l,j}=1$
\begin{eqnarray}\label{Eqn def theta}
  \{\theta_{j,l},\theta_{l,j}\}=\{1,a\}
\end{eqnarray}
for all $j,l\in [0,r)~\mbox{with}~j\ne l$ and $a\in \mathbb{F}_q\backslash\{0,1\}$ for convenience, which can also guarantee the pairwise equations
\begin{eqnarray}\label{Eqn Pairwise}
\left\{ \begin{array}{l}
\theta_{j,l}\mathbf{h}_{j}^{(l)}+\mathbf{h}_{l}^{(j)}=\mathbf{h'}_{j}^{(l)}\\
\theta_{l,j}\mathbf{h}_{l}^{(j)}+\mathbf{h}_{j}^{(l)}=\mathbf{h'}_{l}^{(j)}
\end{array}\right.
\end{eqnarray}
are linearly independent.

The  new code $\mathcal{C}_4$ is  depicted in Table  \ref{Table C4}.

\begin{table*}[htbp]
\begin{center}
\caption{The new storage code  $\mathcal{C}_4$}\label{Table C4}
\begin{tabular}{|c|c|c|c|c|c|c|c|}
\hline   RN 0 & $\cdots$  & RN $k-1$ & TN $0$ ($\mathbf{h'}_{0}$)& TN $1$ ($\mathbf{h}'_{1}$)& $\cdots$  &TN $r-1$ ($\mathbf{h'}_{r-1}$)\\
\hline\hline $\mathbf{f}_0^{(0)}$ &  $\cdots$ & $\mathbf{f}_{k-1}^{(0)}$ &$\mathbf{h}_0^{(0)}$ &  $\theta_{1,0}\mathbf{h}_1^{(0)}+\mathbf{h}_0^{(1)}$ &
$\cdots$ & $\theta_{r-1,0}\mathbf{h}_{r-1}^{(0)}+\mathbf{h}_0^{(r-1)}$ \\
\hline  $\mathbf{f}_0^{(1)}$ &  $\cdots$  &  $\mathbf{f}_{k-1}^{(1)}$& $\theta_{0,1}\mathbf{h}_0^{(1)}+\mathbf{h}_1^{(0)}$&$\mathbf{h}_1^{(1)}$&  $\cdots$  & $\theta_{r-1,1}\mathbf{h}_{r-1}^{(1)}+\mathbf{h}_1^{(r-1)}$ \\
\hline  $\vdots$&$\ddots$ & $\vdots$&$\vdots$&$\vdots$&$\ddots$ &$\vdots$\\
\hline
  $\mathbf{f}_0^{(r-1)}$ & $\cdots$ & $\mathbf{f}_{k-1}^{(r-1)}$ & $\theta_{0,r-1}\mathbf{h}_0^{(r-1)}+\mathbf{h}_{r-1}^{(0)}$& $\theta_{1,r-1}\mathbf{h}_1^{(r-1)}+\mathbf{h}_{r-1}^{(1)}$ & $\cdots$ & $\mathbf{h}_{r-1}^{(r-1)}$ \\
\hline
\end{tabular}
\end{center}
\end{table*}

We next show that the MDS property holds for the new $(n,k)$ storage code $\mathcal{C}_4$.

\begin{Theorem}\label{Thm MDS C4}
Code $\mathcal{C}_4$ has the MDS property.
\end{Theorem}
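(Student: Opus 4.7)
The MDS property asks that the source file be recoverable from any $k$ of the $k+r$ nodes of $\mathcal{C}_4$; equivalently, every $r$-erasure pattern is decodable. The plan is to fix such a pattern, partition it into $r_1$ erased systematic positions $E_S\subseteq[0,k)$ and $r_2$ erased parity positions $E_P\subseteq[0,r)$ with $r_1+r_2=r$, set $P=[0,r)\setminus E_P$, and then invoke the MDS property of the base code $\mathcal{C}_1$ one instance at a time: within each instance $l\in[0,r)$ it suffices to exhibit $k$ known coordinates among $\mathbf{f}_0^{(l)},\ldots,\mathbf{f}_{k-1}^{(l)},\mathbf{g}_0^{(l)},\ldots,\mathbf{g}_{r-1}^{(l)}$. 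I will do this in two phases, first decoding the instances indexed by surviving parity nodes and then those indexed by erased parity nodes.

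In Phase 1, for each $l\in P$ the $k-r_1$ surviving systematic nodes directly supply $k-r_1$ coordinates of instance $l$. To obtain $|P|=r-r_2$ further parity coordinates, first observe that $\mathbf{f}_{k+l}^{(l)}=\mathbf{g}_{p_l(l)}^{(l)}$ is already stored in surviving parity node $l$; next, for each $j\in P\setminus\{l\}$, both $\mathbf{f}_{k+j}^{(l)}$ and $\mathbf{f}_{k+l}^{(j)}$ are available from nodes in $P$, and by \eqref{Eqn new con} they form the $2\times 2$ linear system
\begin{equation*}
\begin{pmatrix}\theta_{j,l} & 1\\ 1 & \theta_{l,j}\end{pmatrix}\begin{pmatrix}\mathbf{g}_{p_l(j)}^{(l)}\\ \mathbf{g}_{p_j(l)}^{(j)}\end{pmatrix}=\begin{pmatrix}\mathbf{f}_{k+j}^{(l)}\\ \mathbf{f}_{k+l}^{(j)}\end{pmatrix}
\end{equation*}
whose determinant $\theta_{j,l}\theta_{l,j}-1=a-1$ is nonzero by \eqref{Eqn def theta}, so both $\mathbf{g}_{p_l(j)}^{(l)}$ and $\mathbf{g}_{p_j(l)}^{(j)}$ can be solved. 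Since $p_l$ is a permutation, the set $\{\mathbf{g}_{p_l(j)}^{(l)}:j\in P\}$ consists of $|P|$ distinct parity coordinates, giving $(k-r_1)+(r-r_2)=k$ knowns in instance $l$; the MDS property of $\mathcal{C}_1$ then reconstructs all of $\mathbf{f}_0^{(l)},\ldots,\mathbf{f}_{k-1}^{(l)}$, and re-encoding yields every $\mathbf{g}_s^{(l)}$.

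In Phase 2, every parity symbol $\mathbf{g}_s^{(j)}$ with $j\in P$ has already been computed. For each $l\in E_P$ and each $j\in P$, the equation $\mathbf{f}_{k+j}^{(l)}=\theta_{j,l}\mathbf{g}_{p_l(j)}^{(l)}+\mathbf{g}_{p_j(l)}^{(j)}$ now contains a single unknown $\mathbf{g}_{p_l(j)}^{(l)}$, since $\mathbf{f}_{k+j}^{(l)}$ is stored at surviving parity node $j$ and $\mathbf{g}_{p_j(l)}^{(j)}$ was produced in Phase 1. By bijectivity of $p_l$, letting $j$ range over $P$ yields $|P|$ distinct parity coordinates of instance $l$, which together with the $k-r_1$ surviving systematic coordinates reach the threshold $k$ required by the MDS property of $\mathcal{C}_1$, completing the reconstruction of instance $l$.

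The hard part will be guaranteeing that the $2\times 2$ couplings in Phase 1 are non-degenerate, and this is exactly the design rationale behind \eqref{Eqn def theta}: insisting on $\{\theta_{j,l},\theta_{l,j}\}=\{1,a\}$ with $a\in\mathbb{F}_q\setminus\{0,1\}$ forces every such system to be invertible. No further property of the permutations $p_0,\ldots,p_{r-1}$ beyond bijectivity is used, so the MDS property is independent of the extra symmetry $p_i(j)=p_j(i)$ required for systematic repair in Theorem~\ref{Thm repair sys}(ii).
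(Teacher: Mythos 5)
Your argument is correct and follows essentially the same route as the paper's proof: you first use the $2\times 2$ couplings among the surviving parity nodes (invertible because $\theta_{j,l}\theta_{l,j}=a\neq 1$) together with the $k-r_1$ surviving systematic symbols to decode the instances indexed by surviving parity nodes, then peel off the already-known parity contributions to decode the remaining instances; this is exactly the two-stage reconstruction the paper carries out via Tables VII--IX with $P$ playing the role of the connected set $J$ and $r_1=t$. Your closing remark that only bijectivity of the $p_l$'s is needed (not the symmetry $p_i(j)=p_j(i)$) is likewise consistent with the paper's argument.
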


\begin{table*}[htbp]
\begin{center}
\caption{
}\label{Table Thm MDS 1}
\begin{tabular}{|c|c|c|c|c|c|c|c|c|}
\hline $\mathbf{h}_{j_0}^{(j_0)}$& $\theta_{j_1,j_0}\mathbf{h}_{j_1}^{(j_0)}+\mathbf{h}_{j_0}^{(j_1)}$ &$\cdots$ & $\theta_{j_{t-1},j_0}\mathbf{h}_{j_{t-1}}^{(j_0)}+\mathbf{h}_{j_0}^{(j_{t-1})}$\\
\hline $\theta_{j_0,j_1}\mathbf{h}_{j_0}^{(j_1)}+\mathbf{h}_{j_1}^{(j_0)}$& $\mathbf{h}_{j_1}^{(j_1)}$& $\cdots$ & $\theta_{j_{t-1},j_1}\mathbf{h}_{j_{t-1}}^{(j_1)}+\mathbf{h}_{j_1}^{(j_{t-1})}$ \\
\hline  $\vdots$ & $\vdots$ &$\ddots$ &$\vdots$ \\
\hline
$\theta_{j_0,j_{t-1}}\mathbf{h}_{j_0}^{(j_{t-1})}+\mathbf{h}_{j_{t-1}}^{(j_0)}$  &  $\theta_{j_1,j_{t-1}}\mathbf{h}_{j_1}^{(j_{t-1})}+\mathbf{h}_{j_{t-1}}^{(j_1)}$
    & $\vdots$  &
 $\mathbf{h}_{j_{t-1}}^{(j_{t-1})}$ \\
\hline
\end{tabular}
\end{center}
\end{table*}

\begin{table*}[htbp]
\begin{center}
\caption{
}\label{Table Thm MDS 3}
\begin{tabular}{|c|c|c|c|}
\hline $\theta_{j_0,j_t}\mathbf{h}_{j_0}^{(j_t)}+\dashuline{\mathbf{h}_{j_t}^{(j_0)}}$& $\theta_{j_1,j_t}\mathbf{h}_{j_1}^{(j_t)}+\dashuline{\mathbf{h}_{j_t}^{(j_1)}}$ & $\cdots$ & $\theta_{j_{t-1},j_t}\mathbf{h}_{j_{t-1}}^{(j_t)}+\dashuline{\mathbf{h}_{j_t}^{(j_{t-1})}}$\\
\hline $\vdots$  & $\vdots$ & $\ddots$  & $\vdots$ \\
\hline
 $\theta_{j_0,j_{r-1}}\mathbf{h}_{j_0}^{(j_{r-1})}+\dashuline{\mathbf{h}_{j_{r-1}}^{(j_0)}}$  & $\theta_{j_1,j_{r-1}}\mathbf{h}_{j_1}^{(j_{r-1})}+\dashuline{\mathbf{h}_{j_{r-1}}^{(j_1)}}$  &
  $\vdots$ & $\theta_{j_{t-1},j_{r-1}}\mathbf{h}_{j_{t-1}}^{(j_{r-1})}+\dashuline{\mathbf{h}_{j_{r-1}}^{(j_{t-1})}}$ \\
\hline
\end{tabular}
\end{center}
\end{table*}

\begin{proof}
The  code $\mathcal{C}_4$ possesses the MDS property if any $k$ out of the $n$ nodes can reconstruct the original file, which is equivalent to reconstructing the data $\mathbf{f}_{i}^{(l)}$, $i\in [0,k)$ and $l\in [0,r)$  at the remainder nodes according to the MDS property of the base code.
We discuss the reconstruction in two cases.
\begin{itemize}

\item [(i)] When connecting  to all the $k$ remainder nodes: there is nothing to prove.

\item [(ii)] When connecting to $k-t$ remainder nodes and $t$ target nodes where $1\le t\le \min\{r,k\}$:  we assume that $I=\{i_0,i_1,\cdots,i_{t-1}\}$  is  the set of the indices of  the remainder nodes which are not connected and $J=\{j_0,j_1,\cdots,j_{t-1}\}$ is the set of the indices of the target nodes which are  connected, where $0\le i_0<\cdots<i_{t-1}<k$ and $0\le j_0<\cdots<j_{t-1}<r$. Denote  $\{j_t,\cdots,j_{r-1}\}=[0,r)\backslash J$.

Firstly, given the data in Table \ref{Table Thm MDS 1}   from the  target nodes connected,  we can obtain the data $\mathbf{h}_{u}^{(l)}$ ($l,u\in J$)   by solving pairwise  linearly independent equations as \eqref{Eqn Pairwise} (specifically for $t=1$, no equation needs to be solved). Secondly, for each $l\in J$, combining the  data $\mathbf{h}_{u}^{(l)}$ ($u\in J$) at the target nodes with the  data $\mathbf{f}_{i}^{(l)}$ ($i\in [0,k-1]\backslash I$) at the $k-t$ remainder nodes of code $\mathcal{C}_4$ connected, we can obtain $\mathbf{h}_{u}^{(l)}$, $u\in [0,r)\backslash J$, by means of the MDS property of the base code $\mathcal{C}_{1}$ and \eqref{Eqn perm h}. Thirdly, from the data in Table \ref{Table Thm MDS 3} at the  target nodes connected, we then are able to obtain the data $\mathbf{h}_{u}^{(l)}$ ($l\in [0,r)\backslash J$, $u\in J$) by eliminating the terms  $\mathbf{h}_{l}^{(u)}$ ($u\in J$, $l\in [0,r)\backslash J$) marked with dash underline. That is, for each $l\in [0,r)$ and $u\in J$, the data $\mathbf{h}_{u}^{(l)}$, \textit{i.e.}, $\mathbf{g}_{\pi_l(u)}^{(l)}$, is available. Finally, together with $\mathbf{f}_{i}^{(l)}$, $i\in [0,k-1]\backslash I$  at the $k-t$  remainder nodes connected,  we can recover the remaining data $\mathbf{f}_{i_0}^{(l)},\cdots,\mathbf{f}_{i_{t-1}}^{(l)}$ by means of the MDS property of the base code $\mathcal{C}_{1}$ for each $l\in [0,r)$.
\end{itemize}
\end{proof}

Next, we verify that the target nodes of  code $\mathcal{C}_4$ have the optimal repair bandwidth and the optimal rebuilding access.

\begin{Theorem}\label{Thm repair parity}
Target node $j$ ($j\in[0,r)$) in code $\mathcal{C}_4$ has the optimal repair bandwidth and the optimal rebuilding access. Specifically,  the
  repair bandwidth profile $\bm{\hat{\beta}}_{k+j}$ and the rebuilding access profile $\bm{\hat{\delta}}_{k+j}$  are given by
\begin{eqnarray*}
  \hat{\beta}_{k+j,i}= \hat{\delta}_{k+j,i}=N,~~ i\in[0,n)\backslash\{k+j\}.
\end{eqnarray*}
\end{Theorem}

\begin{proof}
We show that for any $j\in [0,r)$,  target node $j$ can be repaired by accessing and
downloading $\mathbf{h'}^{(j)}_{l}$, $l\in [0,r)\backslash\{j\}$, and $\mathbf{f}^{(j)}_{i}$, $i\in [0,k)$.

Firstly, using $\mathbf{f}^{(j)}_{i}$, $i\in [0,k)$, we can compute $\mathbf{g}^{(j)}_{s}$, $s\in [0,r)$, and then obtain $\mathbf{h}^{(j)}_{s}$, $s\in [0,r)$, according to \eqref{Eqn perm h}.
Next, for any $l\in [0,r)\backslash\{j\}$, from the downloaded data $\mathbf{h'}^{(j)}_{l}=\theta_{l,j} \mathbf{h}^{(j)}_{l}+ \mathbf{h}^{(l)}_{j}$, we can obtain $\mathbf{h}^{(l)}_{j}$ by subtracting $\mathbf{h}^{(j)}_{l}$ from $\mathbf{h'}^{(j)}_{l}$,  and thus  $\mathbf{h'}^{(l)}_{j}=\theta_{j,l} \mathbf{h}^{(l)}_{j}+ \mathbf{h}^{(j)}_{l}$. Finally, since $\mathbf{h'}^{(j)}_{j}=\mathbf{h}^{(j)}_{j}$, which has already been computed in the first step, target node $j$ can indeed be repaired optimally.

Applying the definitions of the repair bandwidth (profile) and the rebuilding access (profile), we obtain the desired result.
\end{proof}

Finally,  we examine the  repair of the remainder nodes of code $\mathcal{C}_4$, which will be proceeded in two cases, according to whether the repair strategy for a  remainder node  of the base code is \textit{naive} or not. Naive  repair means that a node is repaired by download all the data from any $k$ surviving nodes to first reconstruct the original file, and then repair the failed node.  Particularly,  the repair strategy of the remainder nodes of code $\mathcal{C}_4$ is almost the same as that of the base code.

\begin{Theorem}\label{Thm repair sys}
For each $i\in [0,k)$, remainder node $i$ of the $(n,k)$ MDS code $\mathcal{C}_4$  has  the same  normalized repair
bandwidth and rebuilding access as those of  the base code if
\begin{itemize}
\item [(i)] The repair strategy for remainder node $i$ of the base code is naive, or
\item [(ii)] There exists some matrix $S_{i}$ such that $S_{i,k+j}=S_{i}$ for all $j\in [0,r)$, or
\item [(iii)] $\pi_l(j)=\pi_j(l)$ for $l,j\in [0,r)$.
\end{itemize}
\end{Theorem}

\begin{proof}
If the repair strategy for  remainder node $i$ of the base code is naive, then remainder node $i$ of code $\mathcal{C}_4$ can also be naively repaired  due to the  MDS property of
code $\mathcal{C}_4$.

Let us now focus on the general case. Recall  from the repair mechanism of the base code that, for $l\in [0,r)$, $\mathbf{f}_i^{(l)}$ can be obtained by the data $S_{i,s}\mathbf{f}_s^{(l)}$, $s\in [0,k)\backslash\{i\}$, and $S_{i,k+j}\mathbf{g}_{j}^{(l)}$, $j\in [0,r)$.
If there exists a matrix $S_i$ such that $S_i=S_{i,k+j}$ for all $j\in [0,r)$, or $\pi_l(j)=\pi_j(l)$ for $l,j\in [0,r)$, then
\begin{eqnarray}\label{Eqn constant S}
S_{i,k+\pi_l(j)}=S_{i,k+\pi_j(l)}
\end{eqnarray}
for all $j,l\in [0,r)$ with $j\ne l$.
The repair process for remainder node $i$ of code $\mathcal{C}_4$ can be repaired using the following three steps:
\begin{itemize}
\item [(a)] Download   $S_{i,s}\mathbf{f}_{s}^{(l)}$ and $S_{i,k+\pi_l(j)}\mathbf{h'}_{j}^{(l)}$ with $s\in [0,k)\backslash\{i\}$ and $j,l\in [0,r)$,
 \item [(b)] For all $j,l\in [0,r)$ with $j\ne l$, according to   \eqref{Eqn constant S}, \eqref{Eqn perm h} and \eqref{Eqn new con}, we can compute
      \begin{eqnarray*}\label{Eqn_Sijk}
 &&\theta_{l,j}S_{i,k+\pi_l(j)}\mathbf{h'}_{j}^{(l)}-S_{i,k+\pi_j(l)}\mathbf{h'}_{l}^{(j)}\nonumber\\
 &=&(\theta_{l,j}\theta_{j,l}-1)S_{i,k+\pi_l(j)}\mathbf{h}_{j}^{(l)}\\
 &=&(\theta_{l,j}\theta_{j,l}-1)S_{i,k+\pi_l(j)}\mathbf{g}_{\pi_l(j)}^{(l)}
 \end{eqnarray*}
  to obtain $S_{i,k+\pi_l(j)}\mathbf{g}_{\pi_l(j)}^{(l)}$.
  \item [(c)] For each $l\in [0,r)$, invoke the repair procedure of the base MDS code to regenerate $\mathbf{f}_i^{(l)}$ by the data $S_{i,s}\mathbf{f}_s^{(l)}$, $s\in [0,k)\backslash\{i\}$, and $S_{i,k+j}\mathbf{g}_{j}^{(l)}$, $j\in [0,r)$.
\end{itemize}
The above analysis, together with the fact that the  sub-packetization level of code $\mathcal{C}_4$ is $r$ times as that of the base code, implies the desired result.
\end{proof}

\begin{Corollary}
If the repair strategy for remainder node $i$ of the base code is naive, then the repair
bandwidth profile $\bm{\hat{\beta}}$ and the rebuilding access profile $\bm{\hat{\delta}}$ of remainder node $i$ of  the code $\mathcal{C}_4$ statisfy
\begin{eqnarray*}
\sum\limits_{j=0,j\ne i}^{n-1}\hat{\beta}_{i,j}=
r\sum\limits_{j=0,j\ne i}^{n-1}\beta_{i,j},
~~\sum\limits_{j=0,j\ne i}^{n-1}\hat{\delta}_{i,j}=
r\sum\limits_{j=0,j\ne i}^{n-1}\delta_{i,j};
\end{eqnarray*}
Otherwise, we have
\begin{eqnarray*}
\hat{\beta}_{i,j}=\left\{ \begin{array}{ll}
r\beta_{i,j},& \mbox{if}~j\in[0,k)\backslash\{i\}\\
\sum\limits_{l=0}^{r-1}\beta_{i,k+l},&\mbox{otherwise}
\end{array}\right.
\end{eqnarray*}
and
\begin{eqnarray*}
\hat{\delta}_{i,j}=\left\{ \begin{array}{ll}
r\delta_{i,j},& \mbox{if}~j\in[0,k)\backslash\{i\}\\
\sum\limits_{l=0}^{r-1}\delta_{i,k+l},&\mbox{otherwise}
\end{array}\right..
\end{eqnarray*}

Consequently, if a remainder node has the optimal repair bandwidth   or the optimal
rebuilding access  in the base code, the resultant code
$\mathcal{C}_4$ will maintain the same optimality.
\end{Corollary}

\begin{Remark}
Note that in all the aforementioned  $(n,k)$  MDS codes \cite{invariant-subspace,hadamard,Sasidharan-Kumar,Long-IT,extend-zigzag} except the Zigzag code \cite{Zigzag}, simple repair matrices with the form $S_{i,j}=S_i$ are used. In fact, it was shown in \cite{Access} that any systematic    MDS code that can optimally repair the systematic nodes  can be transformed into another   MDS code with such simple repair matrices, however at a cost of sacrificing a systematic node. The proposed generic transformation is valid for general repair matrices $S_{i,j}$, but the repair strategies for the remainder nodes exhibit different flexibilities if the condition $S_{i,k+j}=S_i$ holds for   all $j\in [0,r)$, i.e., in this case the permutations can be arbitrary as shown in Theorem \ref{Thm repair sys} item (ii).
\end{Remark}

\begin{Remark}
There are many choices of the permutations $\pi_0,\cdots,\pi_{r-1}$ satisfying the condition $\pi_i(j)=\pi_j(i)$ in Theorem \ref{Thm repair sys}, for example,
\begin{eqnarray*}
  \pi_i(j) = (i+j)~mod~r,~~ i,j\in [0,r),
\end{eqnarray*}
as we used in Section  \ref{ex_step2}.
\end{Remark}

\subsection{A Substitution Technique for Step 3 -  Target Nodes Unchanged}

In step 3 of the generic transformation in Section \ref{subsec generic trans}, we modified the data at the $r$ target nodes of code $\mathcal{C}_3$ to endow them with the optimal repair property.
However,  the resultant code $\mathcal{C}_4$ is no longer of systematic form if some $r$ systematic nodes are chosen as the target nodes. In this subsection, we provide an alternative solution, which endows any $r$ target nodes with the optimal repair property, but maintaining the systematic form of the code. This alternative approach allows us to modify the data at some $r$ remainder  nodes by pairing the target nodes' data components at these nodes, essentially substituting the original pairing operation on the target nodes.

Without loss of generality,  we choose  the last nodes as target nodes and modify the data at the first $r$ nodes. Recall that the base code $\mathcal{C}_{1}$ is an MDS code, which implies that $\mathbf{f}_{0}^{(l)}, \cdots, \mathbf{f}_{r-1}^{(l)}$ can be represented by $\mathbf{f}_{r}^{(l)}, \cdots, \mathbf{f}_{k-1}^{(l)}, \mathbf{g}_{0}^{(l)}, \cdots$, $\mathbf{g}_{r-1}^{(l)}$ for any $l\in [0,r)$. That is,
\begin{eqnarray*}
  \mathbf{f}_{j}^{(l)} &=& \sum\limits_{t=r}^{k-1}A_{j,t}\mathbf{f}_{t}^{(l)}+\sum\limits_{t=0}^{r-1} A_{j,t}\mathbf{g}_{t}^{(l)}\\
  &=&\sum\limits_{t=r}^{k-1}A_{j,t}\mathbf{f}_{t}^{(l)}+\sum\limits_{t=0}^{r-1}A_{j,\pi_l(t)}\mathbf{h}_{t}^{(l)}, ~j,~l\in [0,r),
\end{eqnarray*}
for some nonsingular matrices $A_{j,0},\cdots,A_{j,k-1}$ of order $N$, where the second equality follows from \eqref{Eqn perm h}. Based on the MDS code  $\mathcal{C}_3$ and \eqref{Eqn new con}, we can define a new  storage code $\mathcal{C}_4'$ as given in Table \ref{Table C4'},
where
\begin{eqnarray}\label{Eqn f'}
  \mathbf{f'}_{j}^{(l)} = \sum\limits_{t=r}^{k-1}A_{j,t}\mathbf{f}_{t}^{(l)}+\sum\limits_{t=0}^{r-1}A_{j,\pi_l(t)}\mathbf{h'}_{t}^{(l)},  ~j,~l\in [0,r).
\end{eqnarray}

\begin{table*}[htbp]
\begin{center}
\caption{New storage code  $\mathcal{C}_4'$}\label{Table C4'}
\begin{tabular}{|c|c|c|c|c|c|c|c|c|}
\hline  RN 0  & $\cdots$  & RN   $r-1$   &RN   $r$ & $\cdots$  & RN   $k-1$ & TN   $k$ &$\cdots$  & TN   $n-1$ \\
\hline\hline   $\mathbf{f'}_0^{(0)}$ & $\cdots$ &$\mathbf{f'}_{r-1}^{(0)}$ &$\mathbf{f}_{r}^{(0)}$ & $\cdots$ & $\mathbf{f}_{k-1}^{(0)}$ & $\mathbf{h}_{0}^{(0)}$ & $\cdots$ & $\mathbf{h}_{r-1}^{(0)}$ \\
\hline   $\mathbf{f'}_0^{(1)}$ & $\cdots$  & $\mathbf{f'}_{r-1}^{(0)}$ &$\mathbf{f}_{r}^{(1)}$&$\cdots$  &$\mathbf{f}_{k-1}^{(1)}$&  $\mathbf{h}_{0}^{(1)}$& $\cdots$  &$\mathbf{h}_{r-1}^{(1)}$ \\
\hline   $\vdots$ & $\ddots$ & $\vdots$ & $\vdots$ &$\ddots$ &$\vdots$ &$\vdots$ & $\ddots$ & $\vdots$ \\
\hline
$\mathbf{f'}_0^{(r-1)}$ &   $\cdots$&   $\mathbf{f'}_{r-1}^{(0)}$ & $\mathbf{f}_{r}^{(r-1)}$  &  $\cdots$& $\mathbf{f}_{k-1}^{(r-1)}$ &  $\mathbf{h}_{0}^{(r-1)}$&  $\cdots$& $\mathbf{h}_{r-1}^{(r-1)}$ \\
\hline
\end{tabular}
\end{center}
\end{table*}

\begin{table*}[htbp]
\begin{center}
\caption{The storage  code  $\mathcal{C}_3'$}\label{Table C3 II}
\begin{tabular}{|c|c|c|c|c|c|c|c|c|}
\hline RN 0  & $\cdots$  & RN   $r-1$   &RN   $r$ & $\cdots$  & RN   $k-1$ & TN   $k$ &$\cdots$  & TN   $n-1$ \\
\hline\hline $\mathbf{f'}_0^{(0)}$ & $\cdots$ &$\mathbf{f'}_{r-1}^{(0)}$ &$\mathbf{f}_{r}^{(0)}$ & $\cdots$ & $\mathbf{f}_{k-1}^{(0)}$ & $\mathbf{h'}_{0}^{(0)}$ & $\cdots$ & $\mathbf{h'}_{r-1}^{(0)}$ \\
\hline   $\mathbf{f'}_0^{(1)}$ & $\cdots$  & $\mathbf{f'}_{r-1}^{(0)}$ &$\mathbf{f}_{r}^{(1)}$&$\cdots$  &$\mathbf{f}_{k-1}^{(1)}$&  $\mathbf{h'}_{0}^{(1)}$& $\cdots$  &$\mathbf{h'}_{r-1}^{(1)}$ \\
\hline   $\vdots$ & $\ddots$ & $\vdots$ & $\vdots$ &$\ddots$ &$\vdots$ &$\vdots$ & $\ddots$ & $\vdots$ \\
\hline
  $\mathbf{f'}_0^{(r-1)}$ &
   $\cdots$&
   $\mathbf{f'}_{r-1}^{(0)}$ &
 $\mathbf{f}_{r}^{(r-1)}$  &
  $\cdots$&
 $\mathbf{f}_{k-1}^{(r-1)}$ &
  $\mathbf{h'}_{0}^{(r-1)}$&
  $\cdots$&
 $\mathbf{h'}_{r-1}^{(r-1)}$ \\
\hline
\end{tabular}
\end{center}
\end{table*}

Note from \eqref{Eqn new con} that
\begin{eqnarray}\label{Eqn new h'}
\mathbf{h}_{j}^{(l)}=\left\{ \begin{array}{ll}
\mathbf{h'}_{j}^{(j)}, & \textrm{if}~j=l \\
\theta'_{j,l}\mathbf{h'}_{j}^{(l)}+\eta'_{l,j}\mathbf{h'}_{l}^{(j)}, &  \mbox{otherwise}
\end{array}\right.
\end{eqnarray}
where
\begin{eqnarray*}
  \theta'_{j,l} = {\theta_{l,j}\over \theta_{l,j}\theta_{j,l}-1},~~\eta'_{l,j} = {-1\over \theta_{l,j}\theta_{j,l}-1}.
\end{eqnarray*}
In this sense,  the new code $\mathcal{C}_4'$ can be obtained by pairing the data at the target nodes of the storage code $\mathcal{C}_3'$  in Table \ref{Table C3 II}, \textit{i.e.}, by applying step 3 to the code $\mathcal{C}_3'$.

It is obvious that code $\mathcal{C}_3'$ has the MDS property. Then, following the proofs of  Theorems  \ref{Thm MDS C4}-\ref{Thm repair parity}
 we  immediately have a  corollary.

\begin{Corollary}
 Code $\mathcal{C}_4'$ has the MDS property and the same repair property as that of code $\mathcal{C}_4$.
\end{Corollary}

\begin{Remark}\label{Rm_compare}
The formula  \eqref{Eqn new con} is also the key  technique used in \cite{Sasidharan-Kumar2} and \cite{Barg2},
which is named pairwise coupling transformation (PCT) in \cite{tight-bound-on-alpha}. As seen from the three steps of our generic transformation, in addition to the main conceptual differences discussed in Section \ref{sec:history},
a few more subtle differences are that 1) our generic transformation is valid for both scalar MDS codes and vector MDS codes, while the PCT in \cite{Sasidharan-Kumar2,Barg2} only aims for scalar MDS codes; and 2)
the proposed transformation is described in three simple steps and is more flexible, particularly,
\begin{itemize}
  \item [(i)] The permutations in step $2$ can be arbitrary in some cases;
\item [(ii)] The data modification in step $3$ can be performed on any $r$ target nodes, or any other $r$ remainder nodes. As a consequence, the resultant  MDS code can keep its systematic form.
\end{itemize}
\end{Remark}

\section{Applications of the generic transformation}

In the previous section, we provided a generic method that can transform any known nonbinary   $(n,k)$ MDS codes into a new $(n,k)$ MDS code with the optimal rebuilding access for an arbitrary set of $r$ nodes while preserving the  normalized repair bandwidth and the  normalized
rebuilding access  of the other $k$ nodes.
In this section,
we discuss two specific applications of the transformation, which
provide solutions to two long standing problems in this area.

\subsection{Constructing All-Node-Repair MDS Codes}
Clearly, if we start with a base nonbinary MDS code
$\mathcal{C}_1$ which has the optimal repair bandwidth (or the optimal rebuilding
access) for the systematic nodes only, such as the MDS codes constructed in  \cite{hadamard,Hadamard-strategy,Zigzag,invariant-subspace,Long-IT}, we can apply the
transformation by taking the parity nodes as the target
nodes, and obtain an MDS codes $\mathcal{C}_4$ with the optimal repair
bandwidth (or the optimal rebuilding access) for both the systematic
nodes and the parity nodes. Moreover, $\mathcal{C}_4$ uses the same finite field
as the base code, and has a sub-packetization level a factor of $r$ as
large as that of the base code.  

\subsection{Building Optimal Repair Codes from Scalar MDS Codes}

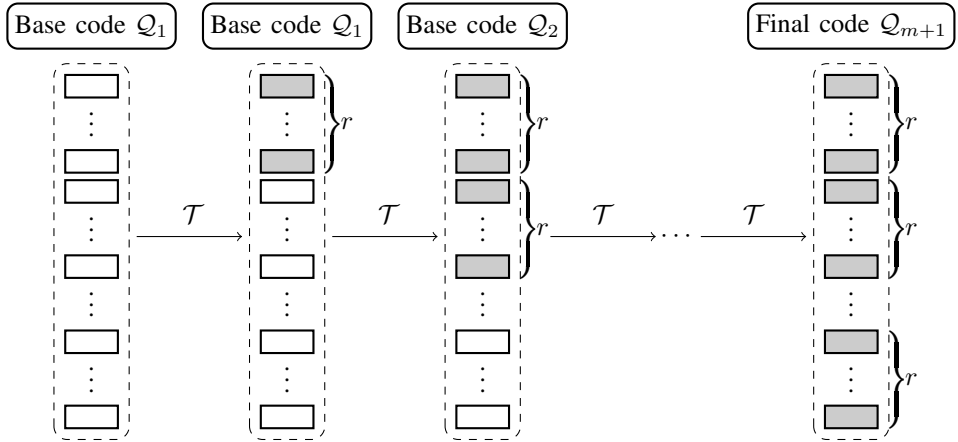
\begin{figure*}[!htbp]
\tikzset{
source/.style={draw,thick,inner sep=.1cm,minimum width=2.5cm},
source1/.style={draw,thick,inner sep=.1cm,minimum width=0.7cm,minimum height=.3cm,fill=gray!40},
source3/.style={draw,thick,inner sep=.1cm,minimum width=0.7cm,minimum height=.3cm},
source2/.style={draw,thick,rounded corners,inner sep=.1cm,minimum height=.6cm},
}
\centering
  \begin{tikzpicture}[every node/.style={inner sep=4pt}, scale=1, every node/.style={scale=1}]
\node [source3] (node 0) at (0,0)  {};
		  \node at (0,-0.4)  {$\vdots$};
		\node [source3] (node r-1) at (0,-1)  {};
		\node [source3] (node 0) at (0,-1.4)  {};
		  \node at (0,-1.8)  {$\vdots$};
		\node [source3] (node r-1) at (0,-2.4)  {};
\node at (0,-2.8)  {{$\vdots$}};
\node [source3] (node 0) at (0,-3.4)  {};
		  \node at (0,-3.8)  {$\vdots$};
		\node [source3] (node r-1) at (0,-4.4)  {};
		
\draw[dashed,rounded corners=5pt] (-.5,-4.7) rectangle (.5,0.3);
\node [source1] (node 0) at (2.6,0)  {};
		  \node at (2.6,-0.4)  {$\vdots$};
		\node [source1] (node r-1) at (2.6,-1)  {};
		\node [source3] (node 0) at (2.6,-1.4)  {};
		  \node at (2.6,-1.8)  {$\vdots$};
		\node [source3] (node r-1) at (2.6,-2.4)  {};
\node at (2.6,-2.8)  {{$\vdots$}};
\node [source3] (node 0) at (2.6,-3.4)  {};
		  \node at (2.6,-3.8)  {$\vdots$};
		\node [source3] (node r-1) at (2.6,-4.4)  {};
\draw[dashed,rounded corners=5pt] (2.1,-4.7) rectangle (3.1,0.3);

\draw[->] (.6,-2) -- (2,-2);	
\node [] (Transformation) at (1.35,-1.7)  {$\mathcal{T}$};	
\node[rotate = 90] at (3.2, -0.5) {$\underbrace{\hspace{1.3cm}}$};	
\node [] (r nodes) at (3.4, -0.5)  {$r$};

\node [source1] (node 0) at (5.2,0)  {};
		  \node at (5.2,-0.4)  {$\vdots$};
		\node [source1] (node r-1) at (5.2,-1)  {};
		\node [source1] (node 0) at (5.2,-1.4)  {};
		  \node at (5.2,-1.8)  {$\vdots$};
		\node [source1] (node r-1) at (5.2,-2.4)  {};
\node at (5.2,-2.8)  {{$\vdots$}};
\node [source3] (node 0) at (5.2,-3.4)  {};
		  \node at (5.2,-3.8)  {$\vdots$};
		\node [source3] (node r-1) at (5.2,-4.4)  {};
\draw[dashed,rounded corners=5pt] (4.7,-4.7) rectangle (5.7,0.3);

\draw[->] (3.2,-2) -- (4.6,-2);	
\node [] (Transformation) at (3.95,-1.7)  {$\mathcal{T}$};	
\node[rotate = 90] at (5.8, -0.5) {$\underbrace{\hspace{1.3cm}}$};	
\node [] (r nodes) at (6, -0.5)  {$r$};	
\node[rotate = 90] at (5.8, -1.9) {$\underbrace{\hspace{1.3cm}}$};	
\node [] (r nodes) at (6, -1.9)  {$r$};

\draw[->] (6.1,-2) -- (7.5,-2);
\node [] (...) at (7.8, -2)  {$\cdots$};
\draw[->] (8.1,-2) -- (9.5,-2);	

\node [source1] (node 0) at (10.1,0)  {};
		  \node at (10.1,-0.4)  {$\vdots$};
		\node [source1] (node r-1) at (10.1,-1)  {};
		\node [source1] (node 0) at (10.1,-1.4)  {};
		  \node at (10.1,-1.8)  {$\vdots$};
		\node [source1] (node r-1) at (10.1,-2.4)  {};
\node at (10.1,-2.8)  {{$\vdots$}};
\node [source1] (node 0) at (10.1,-3.4)  {};
		  \node at (10.1,-3.8)  {$\vdots$};
		\node [source1] (node r-1) at (10.1,-4.4)  {};
\draw[dashed,rounded corners=5pt] ( 9.6,-4.7) rectangle (10.6,0.3);

\node[rotate = 90] at (10.7, -0.5) {$\underbrace{\hspace{1.3cm}}$};	
\node [] (r nodes) at (10.9, -0.5)  {$r$};	
\node[rotate = 90] at (10.7, -1.9) {$\underbrace{\hspace{1.3cm}}$};	
\node [] (r nodes) at (10.9, -1.9)  {$r$};
\node[rotate = 90] at (10.7, -3.9) {$\underbrace{\hspace{1.3cm}}$};	
\node [] (r nodes) at (10.9, -3.9)  {$r$};	
\node [] (Transformation) at (6.8,-1.7)  {$\mathcal{T}$};	
\node [] (Transformation) at (8.8,-1.7)  {$\mathcal{T}$};	
\node [source2,align=center] (Base node) at (0, .8)  {Base code $\mathcal{Q}_1$};	
\node [source2,align=center] (Base node) at (2.6, .8)  {Base code $\mathcal{Q}_1$};	
\node [source2,align=center] (Base node) at (5.2, .8)  {Base code $\mathcal{Q}_2$};	
\node [source2,align=center] (Base node) at (10.1, .8)  {Final code $\mathcal{Q}_{m+1}$};			
\end{tikzpicture}
\caption{The second application of our generic transformation, where $\mathcal{T}$ denotes our generic transformation and a white (resp. gray) rectangle denotes a storage node without (resp. with) the optimal rebuilding access}
\label{fig app 2}
\end{figure*}

 \begin{table*}[htbp]
\begin{center}
\caption{A $(6,4)$ scalar MDS code $\mathcal{Q}_1$ over $\mathbf{F}_5$, where SN and PN respectively denote systematic node and parity node}\label{Ex Table C1}
\begin{tabular}{|c|c|c|c||c|c|}
\hline
SN $0$  &  SN $1$  &  SN $2$  &  SN $3$ &PN $0$ &PN $1$  \\
\hline\hline
$a_0$  & $b_{0}$& $c_{0}$& $d_{0}$& $a_0+b_{0}+c_0+d_0$& $a_0+2b_{0}+3c_0+4d_0$ \\
\hline
\end{tabular}
\end{center}
\end{table*}

\begin{table*}[htbp]
\begin{center}
\caption{The $(6,4)$ systematic MDS code $\mathcal{Q}_2$, where systematic nodes 0 and 1 are chosen as the target nodes}\label{Ex Table Q2}
\begin{tabular}{|c|c|c|c||c|c|}
\hline
SN $0$  &  SN $1$  &  SN $2$  &  SN $3$ &PN $0$ &PN $1$  \\
\hline\hline
$a_0$  & $b_{0}$& $c_{0}$& $d_{0}$& $a_0+(b_{0}-a_1)+c_0+d_0$& $a_0+2(b_{0}-a_1)+3c_0+4d_0$ \\
\hline
$a_1$  & $b_{1}$& $c_{1}$& $d_{1}$& $(a_1+b_0)+b_{1}+c_1+d_1$& $(a_1+b_0)+2b_{1}+3c_1+4d_1$ \\
\hline
\end{tabular}
\end{center}
\end{table*}

\begin{table*}[htbp]
\begin{center}
\caption{The $(6,4)$ MDS code $\mathcal{Q}_3$ in systematic form, where systematic nodes 2 and 3 are chosen as the target nodes}\label{Ex Table Q3}
\begin{tabular}{|c|c|c|c||c|c|}
\hline
SN $0$  &  SN $1$  &  SN $2$  &  SN $3$ &PN $0$ &PN $1$\\
\hline\hline
$a_0$  & $b_{0}$& $c_{0}$& $d_{0}$& $a_0+(b_{0}-a_1)+c_0+(d_0-c_2)$& $a_0+2(b_{0}-a_1)+3c_0+4(d_0-c_2)$ \\
\hline
$a_1$  & $b_{1}$& $c_{1}$& $d_{1}$& $(a_1+b_0)+b_{1}+c_1+(d_1-c_3)$& $(a_1+b_0)+2b_{1}+3c_1+4(d_1-c_3)$ \\
\hline
$a_2$  & $b_{2}$& $c_{2}$& $d_{2}$& $a_2+(b_{2}-a_3)+(c_2+d_0)+d_2$& $a_2+2(b_{2}-a_3)+3(c_2+d_0)+4d_2$ \\
\hline
$a_3$  & $b_{3}$& $c_{3}$& $d_{3}$& $(a_3+b_2)+b_{3}+(c_3+d_1)+d_3$& $(a_3+b_2)+2b_{3}+3(c_3+d_1)+4d_3$ \\
\hline
\end{tabular}
\end{center}
\end{table*}

Suppose that we choose an $(n,k)$ scalar MDS code, such as a Reed-Solomon code, as the base code $\mathcal{Q}_{1}$.  Let $m=\lceil n/r\rceil$ where $r=n-k$. By applying the transformation $m$ times,  we can get MDS codes $\mathcal{Q}_2$, $\mathcal{Q}_3, \cdots$,  $\mathcal{Q}_{m+1}$. In the $i$-th round transformation, where $i\in[1,m]$, we choose  code $\mathcal{Q}_i$ as the base code, nodes $(i-1)r, (i-1)r+1, \cdots, ir-1$ as the target nodes if $i<m$ and nodes $k, k+1, \cdots, n-1$ otherwise, and denote the resultant code as $\mathcal{Q}_{i+1}$. Obviously, the above procedure eventually gives an MDS code $\mathcal{Q}_{m+1}$ over $\mathbf{F}_{q}$ with $q\ge n$, \textit{i.e.}, the same as the base code, while the sub-packetization level is $r^{\lceil n/r\rceil}$, which matches the lower bound in \cite{tight-bound-on-alpha} and thus optimal except when $r|(n-1)$.
It is worthy noting that the field size of the $(n,k)$ MDS code $\mathcal{Q}_{m+1}$ might be smaller than that of the codes proposed in \cite{Barg2,Sasidharan-Kumar2} when $r\nmid n$, which require the field size
 $q\ge r\lceil{n\over r}\rceil$, as shown in Table \ref{Table comp Ye Kumar}.

Particularly, we can also instead apply the transformation only $t$ times in the above procedure, where $t\in [1,\lceil k/r\rceil]$. This yields an $(n,k)$ MDS code with the optimal rebuilding access for $tr$ nodes, while the sub-packetization level is $r^t$, which is also optimal with respect to the bound for the sub-packetization level of $(n,k)$ MDS codes with the optimal rebuilding access for $tr$ nodes \cite{tight-bound-on-alpha}. Figure \ref{fig app 2} reveals the procedure  of the second application.

In the following, we provide an example of the second application.

\begin{Example}
For a $(6,4)$ systematic  scalar MDS code $\mathcal{Q}_1$ over $\mathbf{F}_5$, the structure of which is listed as in Table \ref{Ex Table C1}.

In the following, we convert the MDS code $\mathcal{Q}_1$ into an MDS code with the optimal rebuilding access for all nodes through three rounds of transformations.  Through three rounds of transformations, we obtain code $\mathcal{Q}_2$, $\mathcal{Q}_3$ and $\mathcal{Q}_4$, which are shown  in Tables \ref{Ex Table Q2}, \ref{Ex Table Q3}, and \ref{Ex Table Q4}, respectively. Especially, in step $2$ of each round, we choose all the permutations as the identity permutation according to Theorem $\ref{Thm repair sys}$ for simplicity. Additionally, in the first, second and third rounds, we respectively choose nodes $0$ and $1$, nodes $2$ and $3$, and the two parity nodes as the target nodes, but only modify the data stored at the parity nodes in each round of transformation, to ensure that the resultant code is of systematic form.

For the code $\mathcal{Q}_4$, it is seen that the code maintains the MDS property. Moreover, systematic nodes $0, 1, 2, 3$, parity nodes $0, 1$ can be respectively repaired by accessing and downloading symbols in rows $\{1,3,5,7\}$, $\{2,4,6,8\}$, $\{1,2,5,6\}$, $\{3,4,7,8\}$, $\{1,2,3,4\}$, $\{5,6,7,8\}$ of Table
\ref{Ex Table Q4}
from each surviving node.
\end{Example}

\begin{table*}[htbp]
\begin{center}
\caption{The $(6,4)$ MDS code $\mathcal{Q}_4$ over $\mathbf{F}_5$ with the sub-packetization level 8, where the two parity nodes are chosen as the target nodes}\label{Ex Table Q4}
\begin{tabular}{|c|c|c|c||c|c|}
\hline
SN $0$  &  SN $1$  &  SN $2$  &  SN $3$ &PN $0$ &PN $1$  \\
\hline\hline
\multirow{2}{*}{$a_0$}  & \multirow{2}{*}{$b_{0}$}& \multirow{2}{*}{$c_{0}$}& \multirow{2}{*}{$d_{0}$}& \multirow{2}{*}{$a_0+(b_{0}-a_1)+c_0+(d_0-c_2)$}& $(a_0+2(b_{0}-a_1)+3c_0+4(d_0-c_2))$\\&&&&&$-(a_4+(b_{4}-a_5)+c_4+(d_4-c_6))$ \\
\hline
\multirow{2}{*}{$a_1$}  & \multirow{2}{*}{$b_{1}$}& \multirow{2}{*}{$c_{1}$}& \multirow{2}{*}{$d_{1}$}& \multirow{2}{*}{$(a_1+b_0)+b_{1}+c_1+(d_1-c_3)$}& $
((a_1+b_0)+2b_{1}+3c_1+4(d_1-c_3))$\\&&&&&$-((a_5+b_4)+b_{5}+c_5+(d_5-c_7))$ \\
\hline
\multirow{2}{*}{$a_2$}  & \multirow{2}{*}{$b_{2}$} & \multirow{2}{*}{$c_{2}$} & \multirow{2}{*}{$d_{2}$} & \multirow{2}{*}{$a_2+(b_{2}-a_3)+(c_2+d_0)+d_2$}& $(a_2+2(b_{2}-a_3)+3(c_2+d_0)+4d_2)$\\
&&&&&$-(a_6+(b_{6}-a_7)+(c_6+d_4)+d_6)$ \\
\hline
\multirow{2}{*}{$a_3$}  & \multirow{2}{*}{$b_{3}$}& \multirow{2}{*}{$c_{3}$}& \multirow{2}{*}{$d_{3}$}& \multirow{2}{*}{$(a_3+b_2)+b_{3}+(c_3+d_1)+d_3$}& $((a_3+b_2)+2b_{3}+3(c_3+d_1)+4d_3)$\\&&&&&$-((a_7+b_6)+b_{7}+(c_7+d_5)+d_7)$ \\
\hline
\multirow{2}{*}{$a_4$}  & \multirow{2}{*}{$b_{4}$}& \multirow{2}{*}{$c_{4}$}& \multirow{2}{*}{$d_{4}$}& $a_4+(b_{4}-a_5)+c_4+(d_4-c_6)$& \multirow{2}{*}{$a_4+2(b_{4}-a_5)+3c_4+4(d_4-c_6)$}\\&&&&$+(a_0+2(b_{0}-a_1)+3c_0+4(d_0-c_2))$ &\\
\hline
\multirow{2}{*}{$a_5$}  & \multirow{2}{*}{$b_{5}$}& \multirow{2}{*}{$c_{5}$}& \multirow{2}{*}{$d_{5}$}& $(a_5+b_4)+b_{5}+c_5+(d_5-c_7)$& \multirow{2}{*}{$(a_5+b_4)+2b_{5}+3c_5+4(d_5-c_7)$}\\&&&&$+((a_1+b_0)+2b_{1}+3c_1+4(d_1-c_3))$& \\
\hline
\multirow{2}{*}{$a_6$}  & \multirow{2}{*}{$b_{6}$} & \multirow{2}{*}{$c_{6}$} & \multirow{2}{*}{$d_{6}$} & $a_6+(b_{6}-a_7)+(c_6+d_4)+d_6$& \multirow{2}{*}{$a_6+2(b_{6}-a_7)+3(c_6+d_4)+4d_6$}\\
&&&&$+(a_2+2(b_{2}-a_3)+3(c_2+d_0)+4d_2)$& \\
\hline
\multirow{2}{*}{$a_7$}  & \multirow{2}{*}{$b_{7}$}& \multirow{2}{*}{$c_{7}$}& \multirow{2}{*}{$d_{7}$}& $(a_7+b_6)+b_{7}+(c_7+d_5)+d_7$& \multirow{2}{*}{$(a_7+b_6)+2b_{7}+3(c_7+d_5)+4d_7$}\\&&&&$+((a_3+b_2)+2b_{3}+3(c_3+d_1)+4d_3)$& \\
\hline
\end{tabular}
\end{center}
\end{table*}

\section{Concluding remarks}

In this paper, we proposed a generic transformation that can be applied on any nonbinary existing MDS code, which produces new MDS codes with some arbitrarily chosen $r$ nodes having the optimal repair bandwidth and the optimal rebuilding access. Furthermore, we provided  two important applications of this transformation to yield MDS codes with the optimal repair property. Given the generic nature of the proposed transformation, we anticipate it can be applied or extended to more cases and then lead to more desired storage codes. In fact, the code construction for delayed parity generation reported in \cite{DPG} is indeed partly inspired by the generic transformation proposed here.

\section*{Acknowledgment}
The authors would like to thank the Associate Editor Chih-Chun Wang and the two anonymous reviewers for their valuable suggestions and comments, which have greatly improved the presentation and quality of this paper.

\begin{IEEEbiographynophoto}{Jie Li} (S'16-M'17) received the B.S. and M.S. degrees in mathematics from the Hubei University, Wuhan, China, in 2009 and 2012, respectively, and received the Ph.D. degree in communication engineering from the Southwest Jiaotong University, Chengdu, China, in 2017.

From Oct. 2015 to Oct. 2016, he was a visiting Ph.D. student in the Department of Electrical Engineering and Computer Science, The University of Tennessee at Knoxville, TN, USA. Currently he is  a postdoctoral fellow at the Department of Mathematics, Hubei University, Wuhan, China. His research interests include coding for distributed storage and sequence design.
Dr. Li was a recipient of the Jack Keil Wolf ISIT Student Paper Award in 2017.
\end{IEEEbiographynophoto}

\begin{IEEEbiographynophoto}{Xiaohu Tang} (M'04)  received the B.S. degree in applied mathematics from
the Northwest Polytechnic University, Xi'an, China, the M.S. degree in applied
mathematics from the Sichuan University, Chengdu, China, and the Ph.D.
degree in electronic engineering from the Southwest Jiaotong University,
Chengdu, China, in 1992, 1995, and 2001 respectively.

From 2003 to 2004, he was a research associate in the Department of Electrical
and Electronic Engineering, Hong Kong University of Science and Technology.
From 2007 to 2008, he was a visiting professor at University of Ulm,
Germany. Since 2001, he has been in the School of Information Science and Technology,
Southwest Jiaotong University, where he is currently a professor. His research
interests include coding theory, network security, distributed storage and information processing for big data.

Dr. Tang was the recipient of the National excellent Doctoral Dissertation
award in 2003 (China), the Humboldt Research Fellowship in 2007
(Germany), and the Outstanding Young Scientist Award by NSFC in 2013
(China). He serves as Associate Editors for several journals including \textit{IEEE
Transactions on Information Theory} and \textit{IEICE Transactions on
Fundamentals}, and served on a number of technical program committees of
conferences.
\end{IEEEbiographynophoto}

\begin{IEEEbiographynophoto}{Chao Tian} (S'00-M'05-SM'12) received the B.E. degree in Electronic
Engineering from Tsinghua University, Beijing, China, in 2000 and the
M.S. and Ph. D. degrees in Electrical and Computer Engineering from
Cornell University, Ithaca, NY in 2003 and 2005, respectively. Dr. Tian
was a postdoctoral researcher at Ecole Polytechnique Federale de Lausanne
(EPFL) from 2005 to 2007, a member of technical staff-research at AT\&T
Labs-Research in New Jersey from 2007 to 2014, and an Associate Professor
in the Department of Electrical Engineering and Computer Science at the
University of Tennessee Knoxville from 2014 to 2017. He joined the Department
of Electrical and Computer Engineering at Texas A\&M University as
an Associate Professor in 2017. His research interests include data storage
systems, multi-user information theory, joint source-channel coding, signal
processing, and compute algorithms.
Dr. Tian received the Liu Memorial Award at Cornell University in 2004,
AT\&T Key Contributor Award in 2010, 2011 and 2013, and 2014 IEEE
ComSoc DSTC Data Storage Best Paper Award. He was an Associate Editor
for the IEEE SIGNAL PROCESSING LETTERS from 2012 to 2014, and is
currently an Editor for the IEEE TRANSACTIONS ON COMMUNICATIONS.
\end{IEEEbiographynophoto}

\begin{thebibliography}{99}
\bibitem{Micro} C. Huang, H. Simitci, Y. Xu, A. Ogus, B. Calder, P. Gopalan, J. Li, and
S. Yekhanin, ``Erasure coding in Windows Azure storage," in \textit{Proc. 2012 USENIX Annual Technical Conference}, Boston, MA, Jun. 2012, pp. 1-12.

\bibitem{ocean} S. Rhea, C. Wells, P. Eaton, D. Geels, B. Zhao, H. Weatherspoon, and
J. Kubiatowicz, ``Maintenance-free global data storage," \textit{IEEE Internet
Comput.,} vol. 5, no. 5, pp. 40-49, Sep.-Oct. 2001.

\bibitem{total} R. Bhagwan, K. Tati, Y.-C. Cheng, S. Savage, and G. M. Voelker,
``Total recall: System support for automated availability management,"
in \textit{Proc. 1st Symposium on Networked Systems Design and Implementation (NSDI)}, San Francisco, CA, Mar. 2004, pp. 1-14.

\bibitem{Dhash} F. Dabek, J. Li, E. Sit, J. Robertson, M. Kaashoek, and R. Morris, ``Designing
a DHT for low latency and high throughput," in \textit{Proc. 1st Symposium on Networked Systems Design and Implementation (NSDI)}, San Francisco, CA, Mar. 2004, pp. 1-14.


\bibitem{RS-codes} I. Reed and G. Solomon, ``Polynomial
codes over certain finite fields,"  \textit{J. Soc.
Ind. Appl. Math.,} vol. 8, no. 2, pp. 300-304,
Jun. 1960.

\bibitem{Dimakis} A. G. Dimakis, P. Godfrey, Y. Wu, M. Wainwright, and K. Ramchandran, ``Network coding for distributed storage systems," \textit{IEEE Trans. Inform. Theory,} vol. 56, no. 9, pp. 4539-4551, Sep. 2010.

\bibitem{Barg1} M. Ye and A. Barg, ``Explicit constructions of high-rate MDS array codes with optimal repair bandwidth," \textit{IEEE Trans. Inform. Theory,} vol. 63, no. 4, pp. 2001-2014, Apr. 2017.


\bibitem{product} K.V. Rashmi, N.B. Shah, and P.V. Kumar, ``Optimal exact-regenerating codes for distributed storage at the MSR and MBR points via a product-matrix construction," \textit{IEEE Trans. Inform. Theory,} vol. 57, no. 8, pp. 5227-5239, Aug. 2011.

\bibitem{Suh} C. Suh and K. Ramchandran, ``Exact-repair MDS code construction
 using interference alignment," \textit{IEEE Trans. Inform. Theory,} vol. 57, no. 3, pp. 1425-1442, Mar. 2011.

\bibitem{repair-parity-zigzag} J. Li and X.H. Tang, ``Optimal exact repair strategy for the parity nodes of the $(k+2,k)$ Zigzag code," \textit{IEEE Trans. Inform. Theory,} vol. 62, no. 9, pp. 4848-4856, Sep. 2016.

\bibitem{Hadamard-strategy} X.H. Tang, B. Yang, J. Li, and H.D.L. Hollmann, ``A new repair strategy for the hadamard minimum storage regenerating codes for distributed storage systems,'' \textit{IEEE Trans. Inform. Theory,}  vol. 61, no. 10, pp. 5271-5279, Oct. 2015.


\bibitem{Sasidharan-Kumar} B. Sasidharan, G.K. Agarwal, and P.V. Kumar,
``A high-rate MSR code with polynomial sub-packetization level," in \textit{Proc. IEEE Int. Symp. Inform. Theory,} Hong Kong, China, Jun. 2015, pp. 2051-2055.

\bibitem{hadamard} D.S. Papailiopoulos, A.G. Dimakis, and V.R. Cadambe, ``Repair optimal erasure codes through hadamard designs," \textit{IEEE Trans. Inform. Theory,} vol. 59, no. 5, pp. 3021-3037, May 2013.

\bibitem{extend-zigzag} Z. Wang, I. Tamo, and J. Bruck, ``On codes for optimal rebuilding access," in \textit{Proc. 49th Annu. Allerton Conf. Commun.,
Control, Comput.,} Monticello, IL,  Sep. 2011, pp. 1374-1381.





\bibitem{Zigzag} T. Tamo, Z. Wang, and J. Bruck, ``Zigzag codes: MDS array codes
with optimal rebuilding," \textit{IEEE Trans. Inform. Theory,} vol. 59, no. 3, pp. 1597-1616, Mar. 2013.

\bibitem{Goparaju} S. Goparaju, A. Fazeli, and A. Vardy, ``Minimum storage regenerating codes
for all parameters," \textit{IEEE Trans. Inform. Theory,} vol. 63, no. 10, pp. 6318-6328, Oct. 2017.


\bibitem{invariant-subspace} J. Li, X.H. Tang, and U. Parampalli, ``A framework of constructions of minimal storage regenerating codes
with the optimal access/update property," \textit{IEEE Trans. Inform. Theory,} vol. 61, no. 4, pp. 1920-1932, Apr. 2015.

\bibitem{Etzion} N. Raviv, N, Silberstein, and Tuvi Etzion, ``Constructions of high-rate minimum storage
regenerating codes over small fields," \textit{IEEE Trans. Inform. Theory,} vol. 63, no. 4, pp. 2015-2038, Apr. 2017.

\bibitem{Long-IT} Z. Wang, T. Tamo, and J. Bruck, ``Explicit minimum storage regenerating codes,"  \textit{IEEE Trans. Inform. Theory,} vol. 62, no. 8, pp. 4466-4480, Aug. 2016.

\bibitem{tight-bound-on-alpha} S.B. Balaji and P.V. Kumar, ``A tight lower bound on the sub-packetization
level of optimal-access MSR and MDS codes," [Online]. Available at: arXiv: 1710.05876v1 [cs.IT]

\bibitem{transform-arxiv}  J. Li,  X.H. Tang, and C. Tian,  ``Enabling all-node-repair in minimum storage regenerating codes," [Online]. Available at: arXiv: 1604.07671 [cs.IT]

\bibitem{Barg2} M. Ye and A. Barg, ``Explicit constructions of optimal-access MDS codes with nearly optimal sub-packetization," \textit{IEEE Trans. Inform. Theory,} vol. 63, no. 10, pp. 6307-6317, Oct. 2017.

\bibitem{Sasidharan-Kumar2} B. Sasidharan, M. Vajha, and P.V. Kumar, ``An explicit, coupled-layer construction of a high-rate MSR code with low sub-packetization level, small field size and all-node repair,"   \textit{ arXiv: 1607.07335 [cs.IT]}

\bibitem{Sasidharan-Kumar3} B. Sasidharan, M. Vajha, and P.V. Kumar, ``An explicit, coupled-layer construction of a
high-rate MSR code with low sub-packetization
level, small field size and $d<(n-1)$," in \textit{Proc. IEEE Int. Symp. Inform. Theory,} Aachen, Germany, Jun. 2017, pp. 2048-2052.

\bibitem{transform-ISIT}  J. Li,  X.H. Tang, and C. Tian, ``A generic transformation for optimal repair bandwidth and rebuilding access in MDS codes",  in \textit{Proc. IEEE Int. Symp. Inform. Theory,} Aachen, Germany, Jun. 2017, pp. 1623-1627.

\bibitem{Alon} N. Alon, ``Combinatorial nullstellensatz," \textit{Combinat. Probab.
Comput.,} vol. 8, no. 1-2, pp. 7-29, Jan. 1999.



\bibitem{piggyback-arxiv} K.V. Rashmi, N.B. Shah, and K. Ramchandran, ``A piggybacking design framework for read-and download-efficient distributed storage codes," \textit{IEEE Trans. Inform. Theory,} vol. 63, no. 9, pp. 5802-5820, Sep. 2017.

\bibitem{Hitchhiker} K.V. Rashmi, N.B. Shah, D. Gu, H. Kuang, D. Borthakur, and K. Ramchandran, ``A ``hitchhiker's" guide to fast and efficient data reconstruction in erasure-coded data centers," in \textit{Proc. ACM SIGCOMM,} pp. 331-342, 2014.



\bibitem{Yang-piggyback} B. Yang, X.H. Tang, and J. Li, ``A systematic piggybacking design for minimum storage regenerating codes," \textit{IEEE Trans. Inform. Theory,} vol. 61, no. 11, pp. 5779-5786, Nov. 2015.

\bibitem{Access} T. Tamo, Z. Wang, and J. Bruck, ``Access versus bandwidth in codes for storage," \textit{IEEE Trans. Inform. Theory,} vol. 60, no. 4, pp. 2028-2037, Apr. 2014.
















\bibitem{DPG} S. Mousavi, T. Zhou, C. Tian, ``Delayed parity generation in MDS storage codes," in \textit{Proc. IEEE Int. Symp. Inform. Theory,} Vail, CO, Jun. 2018.

\end{thebibliography}
\end{document}